\documentclass[sigconf]{acmart}
\AtBeginDocument{%
  }
\usepackage{enumitem}
\usepackage{pifont}

\usepackage{multirow}
\usepackage{colortbl}

\definecolor{teacherblue}{HTML}{EAF5FC}
\definecolor{regerblue}{HTML}{C6E6F6}

\newcommand{\mstd}[2]{%
    \mbox{#1{\footnotesize\ensuremath{\pm}#2}}%
}

\makeatletter
\let\originalfnsymbol\@fnsymbol
\renewcommand{\@fnsymbol}[1]{%
  \ifnum#1=1\relax
    \originalfnsymbol{2}%
  \else
    \originalfnsymbol{#1}%
  \fi
}
\makeatother

\begin{document}

%%
%% The "title" command has an optional parameter,
%% allowing the author to define a "short title" to be used in page headers.
% \title{From Continuous to Discrete: Modeling inter-item Relations with Semantic IDs for Generative Recommendation}
\title{REGER: Modeling inter-item Relations with Semantic IDs for Generative Recommendation}

\author{Yuxi Huang}
\affiliation{
  \institution{University of Electronic Science and Technology of China}
  \city{Chengdu}
  \country{China}
}
\email{yuxi.h.cs@gmail.com}

\author{Silin Zhou}
\affiliation{
  \institution{University of Electronic Science and Technology of China}
  \city{Chengdu}
  \country{China}
}
\email{zhousilinxy@gmail.com}

\author{Chenhao Wang}
\affiliation{
  \institution{University of Electronic Science and Technology of China}
  \city{Chengdu}
  \country{China}
}
\email{chenhao.wang@std.uestc.edu.cn}

\author{Zihan Li}
\affiliation{
  \institution{University of Electronic Science and Technology of China}
  \city{Chengdu}
  \country{China}
}
\email{202621081431@std.uestc.edu.cn}

\author{Gangming Guo}
\affiliation{
  \institution{University of Electronic Science and Technology of China}
  \city{Chengdu}
  \country{China}
}
\email{guogm1016@gmail.com}

\author{Pei Ke}
\affiliation{
  \institution{University of Electronic Science and Technology of China}
  \city{Chengdu}
  \country{China}
}
\email{kepei1106@outlook.com}

\author{Shen Gao}
\authornote{Shen Gao and Shuo Shang are corresponding authors.}
\affiliation{
  \institution{University of Electronic Science and Technology of China}
  \city{Chengdu}
  \country{China}
}
\email{shengao@uestc.edu.cn}

\author{Shuo Shang}
\authornotemark[1]
\affiliation{
  \institution{University of Electronic Science and Technology of China}
  \city{Chengdu}
  \country{China}
}
\email{jedi.shang@gmail.com}

%%
%% The abstract is a short summary of the work to be presented in the
%% article.
\begin{abstract}
 Personalized recommendation depends on learning semantic and collaborative relations among items from online content and user interactions. Generative recommendation (GR) offers a promising paradigm by autoregressively generating discrete semantic IDs (SIDs) of items. However, randomly initialized SID token embeddings and GR models trained from scratch do not directly inherit the semantic relations inherent in items, while inter-item collaborative relations are learned largely implicitly through token-level attention and next-token supervision. To address this limitation, we propose \textbf{REGER}, a unified framework that explicitly incorporates inter-item relations into semantic pretraining, history encoding, and SID generation. We pretrain the GR model through generative semantic retrieval, enabling it to inherit inter-item semantic similarity by learning to generate semantic neighbors’ SIDs. We combine global attention with local attention aligned with SID boundaries to capture historical context and directed transitions between adjacent items. We further develop a theoretical framework for hierarchical preference decomposition over the SID prefix tree, proving an exact factorization of collaborative preference distributions and a weighted decomposition of item-level KL divergence into prefix-conditional KL divergences. Building on this analysis, we distill a discriminative teacher’s full conditional distributions over all legal next tokens at ground-truth prefixes along the target SID path, jointly optimizing this objective with cross-entropy. 
 % Extensive experiments on real-world datasets demonstrate substantial improvements over strong recommendation baselines and support the effectiveness of the proposed preference distillation. 
 Extensive experiments on real-world datasets demonstrate consistent improvements over existing GR baselines and support the effectiveness of the proposed preference distillation.
 % The code is available at 
 % \url{https://anonymous.4open.science/r/REGER/}.
\end{abstract}

%%
%% The code below is generated by the tool at http://dl.acm.org/ccs.cfm.
%% Please copy and paste the code instead of the example below.
%%
% \begin{CCSXML}
% <ccs2012>
%  <concept>
%   <concept_id>00000000.0000000.0000000</concept_id>
%   <concept_desc>Do Not Use This Code, Generate the Correct Terms for Your Paper</concept_desc>
%   <concept_significance>500</concept_significance>
%  </concept>
%  <concept>
%   <concept_id>00000000.00000000.00000000</concept_id>
%   <concept_desc>Do Not Use This Code, Generate the Correct Terms for Your Paper</concept_desc>
%   <concept_significance>300</concept_significance>
%  </concept>
%  <concept>
%   <concept_id>00000000.00000000.00000000</concept_id>
%   <concept_desc>Do Not Use This Code, Generate the Correct Terms for Your Paper</concept_desc>
%   <concept_significance>100</concept_significance>
%  </concept>
%  <concept>
%   <concept_id>00000000.00000000.00000000</concept_id>
%   <concept_desc>Do Not Use This Code, Generate the Correct Terms for Your Paper</concept_desc>
%   <concept_significance>100</concept_significance>
%  </concept>
% </ccs2012>
% \end{CCSXML}

% \ccsdesc[500]{Do Not Use This Code~Generate the Correct Terms for Your Paper}
% \ccsdesc[300]{Do Not Use This Code~Generate the Correct Terms for Your Paper}
% \ccsdesc{Do Not Use This Code~Generate the Correct Terms for Your Paper}
% \ccsdesc[100]{Do Not Use This Code~Generate the Correct Terms for Your Paper}
\begin{CCSXML}
<ccs2012>
 <concept>
  <concept_id>10002951.10003317.10003347.10003350</concept_id>
  <concept_desc>Information systems~Recommender systems</concept_desc>
  <concept_significance>500</concept_significance>
 </concept>
</ccs2012>
\end{CCSXML}

\ccsdesc[500]{Information systems~Recommender systems}

%%
% Keywords. The author(s) should pick words that accurately describe
% the work being presented. Separate the keywords with commas.
\keywords{Generative Recommendation, Semantic ID, Knowledge Distillation, Collaborative information}
%% A "teaser" image appears between the author and affiliation
%% information and the body of the document, and typically spans the
%% page.
% \begin{teaserfigure}
%  \includegraphics[width=\textwidth]{sampleteaser}
%   \caption{Seattle Mariners at Spring Training, 2010.}
%   \Description{Enjoying the baseball game from the third-base
%   seats. Ichiro Suzuki preparing to bat.}
%   \label{fig:teaser}
% \end{teaserfigure}

% \received{20 February 2007}
% \received[revised]{12 March 2009}
% \received[accepted]{5 June 2009}

%%
%% This command processes the author and affiliation and title
%% information and builds the first part of the formatted document.
\maketitle

\section{Introduction}
\label{Introduction}
\begin{figure}[t]
    \centering
    \includegraphics[width=1\linewidth]{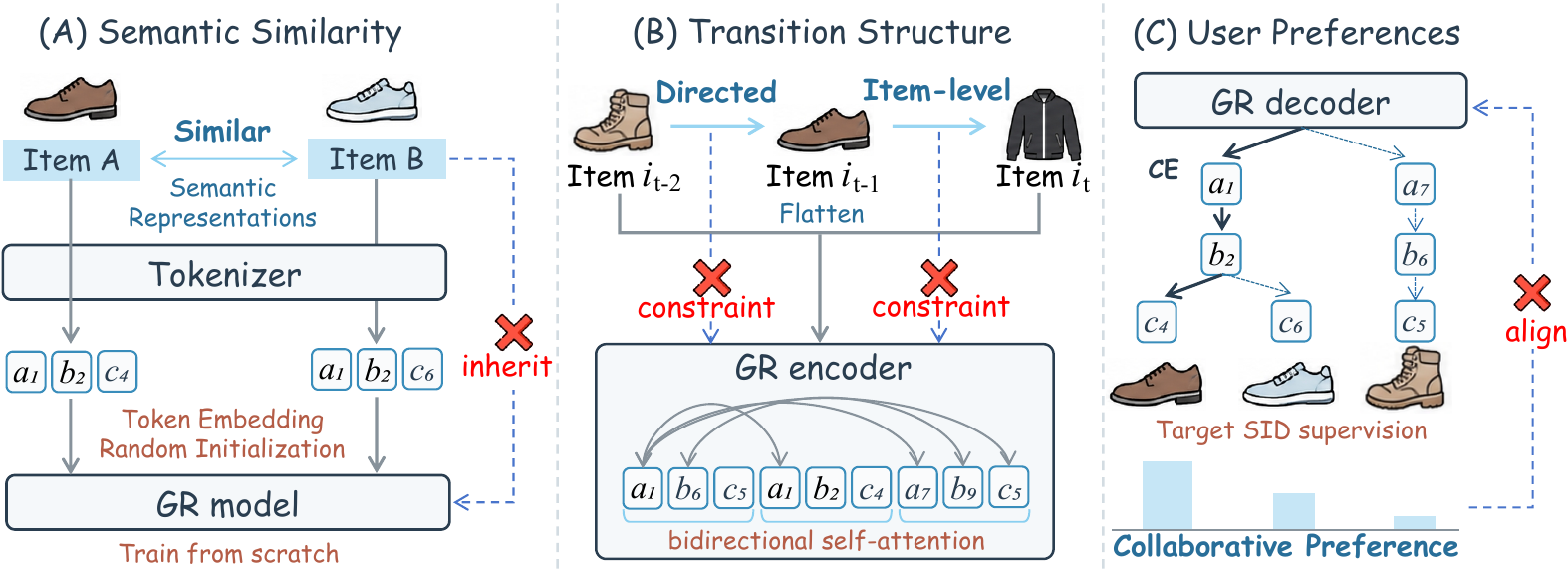}
    \caption{Challenges in explicitly modeling inter-item semantic and collaborative relations in a standard GR framework:
(A) semantic similarity relations arising from item text,
(B) directed transition relations from the previous to the current item in the interaction history, and
(C) ranking relations among candidate items induced by user preferences.}
    \label{fig:intro}
\end{figure}
% Recommender systems play a key role in mitigating information overload on the Web by learning user preferences from historical interactions and helping users discover relevant products and content across e-commerce, video, and social media platforms~\cite{Xie2025LOHRec,Lin2026CARE}.
Generative recommendation (GR)~\cite{Ju2025Handbook,Hou2026GRSurvey} has emerged as a promising paradigm that captures user preferences end-to-end by autoregressively generating discrete token sequences representing items. Unlike traditional discriminative methods that learn continuous collaborative embeddings for individual items, GR typically uses residual quantization to map pretrained item representations into compact semantic identifiers (SIDs), where common prefixes reflect semantic similarity and shared tokens facilitate knowledge transfer across items~\cite{Rajput2024TIGER,Xia2026ACERec}. Realizing this potential requires effective modeling of semantic and collaborative relations among items in discrete token sequences.

Existing efforts have largely focused on improving item tokenization to better capture item relations. Some methods draw on subword tokenization to merge fine-grained tokens into larger units based on co-occurrence pattern~\cite{Hou2025Actionpiece,Miao2026SST}, while others incorporate collaborative information into SIDs by aligning semantic representations with collaborative item embeddings~\cite{Xiao2026UNGER,Zhao2026MusicRec}. Recent work also combines GR with dense matching from traditional recommenders through hybrid retrieval, exploiting the complementary strengths of the two paradigms~\cite{Yang2025LIGER,Ding2026MemGen-GR}. Furthermore, GFlowGR\cite{Wang2026GFlowGR} incorporates collaborative signals into item-level rewards to guide SID generation through flow balance objectives, while TopoTok~\cite{Liu2026ToPoTok} preserves item neighborhood relations during tokenization through topology distillation at multiple granularities.
Despite these advances, an inherent bottleneck remains in GR: \textbf{how to incorporate inter-item relations in continuous embedding space into the modeling and generation of discrete token sequences.} 
% Fig.~\ref{fig:intro} illustrates how this bottleneck manifests in a standard encoder-decoder GR framework.

Modeling inter-item relations is central to capturing user preferences in sequential recommendation~\cite{Wang2023Causer,Kim2026PRISM}. Item text provides semantic cues for identifying potential substitutes and complements~\cite{McAuley2015Substitutable}, while co-occurrence and transition patterns in user interaction histories provide collaborative signals that characterize behavioral associations between items~\cite{Wang2020GCE-GNN}.
Moreover, differences in a user's preference strengths across candidate items give rise to relative preference relations among them, providing the basis for personalized ranking~\cite{Lee2021DCD}.
These relations are defined at the item level, whereas GR uses SID tokens as its basic modeling units, leaving inter-item relations to be learned implicitly.

% As illustrated in Fig~\ref{fig:intro}(A), although shared prefixes preserve some semantic structure, randomly initialized token embeddings and a GR model trained from scratch do not directly inherit the semantic similarity relations captured by continuous item representations. During history encoding, user interaction histories are flattened into SID token sequences, and the encoder integrates token information through bidirectional self-attention. However, its attention structure does not explicitly impose local constraints that respect complete SID boundaries and capture directed transitions between adjacent items, as illustrated in Fig~\ref{fig:intro}(B). During generation, GR relies on next-token cross-entropy (CE) to maximize the likelihood of the target SID. However, this single-target supervision does not fully exploit the set of relevant candidate items or explicitly align generation with collaborative preference relations between users and candidate items, leaving inter-item competitive relations without explicit constraints, as shown in Fig.~\ref{fig:intro}(C). GR therefore needs to explicitly model inter-item relations when learning item semantics, encoding interaction histories, and generating recommendations.
As shown in Fig.~\ref{fig:intro}, this bottleneck manifests as three challenges in a standard encoder-decoder GR framework.
\textbf{Challenge 1: Semantic Similarity.}
Although shared prefixes preserve some semantic structure, randomly initialized token embeddings and a GR model trained from scratch do not directly inherit the semantic similarity relations captured by continuous item representations.
\textbf{Challenge 2: Transition Structure.}
During history encoding, user interaction histories are flattened into SID token sequences, and the encoder integrates token information through bidirectional self-attention.
However, its attention structure does not explicitly impose local constraints that respect complete SID boundaries and capture directed transitions between adjacent items.
\textbf{Challenge 3: User Preference.}
During generation, GR relies on next-token cross-entropy (CE) to maximize the likelihood of the target SID.
However, this single-target supervision does not fully exploit the set of relevant candidate items or explicitly align generation with collaborative preference relations between users and candidate items, leaving inter-item competitive relations without explicit constraints.
GR therefore needs to explicitly model inter-item relations when learning item semantics, encoding interaction histories, and generating recommendations.

To address these challenges, we propose a unified \textbf{Ge}nerative \textbf{R}ecommendation framework that explicitly models semantic and collaborative \textbf{Re}lations among items, termed \textbf{REGER}.
For \textbf{Challenge 1}, we design a \emph{Generative Semantic Retrieval} (GSR) pretraining task, where the encoder takes an item's SID as input and the decoder generates the SID of a neighboring item in continuous semantic representation space.
The resulting initialization enables both SID token embeddings and the GR model to inherit inter-item semantic similarity relations before learning user preferences from interactions.
For \textbf{Challenge 2}, we propose \emph{Item Transition Structure Attention} (ITSA), which combines token-global and item-local attention heads in the history encoder.
The former capture dependencies among tokens across the entire interaction history through bidirectional self-attention, while the latter use local attention windows aligned with complete SID boundaries to jointly attend to tokens of the current item and its immediate predecessor.
This design integrates information within each item while introducing directed transition constraints, allowing global context and local transition structure to complement each other.
For \textbf{Challenge 3}, we organize SIDs into a prefix tree and establish a theoretical framework for the hierarchical decomposition of collaborative preference distributions based on the nested item sets associated with its prefixes.
We prove that the item preference distribution estimated by a discriminative recommender can be exactly factorized into conditional distributions over legal next tokens at each prefix.
We further derive a decomposition of item-level KL divergence into a weighted sum of prefix-conditional KL divergences.
Building on this theoretical analysis, we propose \emph{Hierarchical Collaborative Preference Distillation} (HCPD), which aligns the teacher's and generator's full conditional distributions over all legal next tokens at the ground-truth prefixes along the target SID path.
We jointly optimize this distillation objective with CE so that, given a user's interaction history, collaborative preference relations among candidate items directly guide SID generation.

Our main contributions are summarized as follows:
% \begin{itemize}[leftmargin=*]
%     \item We identify inherent challenges in modeling inter-item relations in GR in terms of semantic similarity, transition structure, and user preferences. We propose \textbf{REGER}, a unified generative recommendation framework, to address these challenges.
%     \item We establish a theoretical framework for hierarchical preference decomposition over SID prefix trees, proving an exact factorization of item preference distributions and a weighted decomposition of item-level KL into prefix-conditional KL divergences.
%     \item Experiments on six real-world benchmarks demonstrate consistent gains from explicit inter-item relation modeling.
% \end{itemize}

\noindent\ding{182} We identify inherent challenges in modeling inter-item relations in GR in terms of semantic similarity, transition structure, and user preferences. We propose \textbf{REGER}, a unified generative recommendation framework, to address these challenges.

\noindent\ding{183} 
% We establish a theoretical framework for hierarchical preference decomposition over SID prefix trees, proving an exact factorization of item preference distributions and a weighted decomposition of item-level KL into prefix-conditional KL divergences.
We establish a theoretical framework that exactly factorizes item preference distributions over SID prefix trees and decomposes item-level KL into weighted prefix-conditional KL terms. Building on this, we develop the HCPD objective to guide SID generation.

\noindent\ding{184} 
% Experiments on six real-world benchmarks demonstrate consistent gains from explicit inter-item relation modeling and validate the effectiveness and efficiency of our proposed HCPD.
Experiments on six real-world benchmarks demonstrate REGER’s consistent gains from explicit inter-item relation modeling and validate the effectiveness and efficiency of HCPD.
% \par

\begingroup
\brokenpenalty=100
% \par
% \vspace{-0.9 em}
\section{Related Work}
\label{Related Work}
% \noindent\textbf{Generative Recommendation}
\subsection{Generative Recommendation}
GR uses item tokenization to encode prior knowledge of inter-item relations into SIDs. TIGER~\cite{Rajput2024TIGER} pioneered the use of residual quantization in GR to learn semantic codes by reconstructing item content representations. Subsequent research has focused on incorporating collaborative signals into SID and aligning tokenization with recommendation objectives. LETTER~\cite{Wang2024LETTER} introduces collaborative alignment and code diversity regularization, while ETEGRec~\cite{Liu2025ETEGRec} alternates between tokenizer and recommender updates using alignment objectives. Other studies reconsider how item tokens are organized and represented. ActionPiece~\cite{Hou2025Actionpiece} learns merge rules over unordered feature sets using co-occurrences within items and across adjacent interactions. SST~\cite{Miao2026SST} compresses history-side item tokens into semantic subwords while retaining fixed-length SIDs as generation targets. DECOR~\cite{Liu2026DECOR} preserves token assignments, combines frozen codebook embeddings with learnable collaborative embeddings, and refines token representations through contextual composition. However, the semantic organization of SIDs does not by itself ensure that the generator learns the semantics of complete items and the relations among them. 
% REGER therefore uses explicit inter-item relations to guide both semantic pretraining and subsequent recommendation training, connecting the semantic information used for tokenization with the encoding and generation of SID sequences.

\subsection{Inter-item Relation Modeling}
% Discriminative recommenders exploit inter-item relations through transition structures, representation geometry, and ranking supervision~\cite{Wang2020GCE-GNN}. 
% % FPMC factorizes personalized transition patterns to combine sequential effects with long-term user preferences, while GCE-GNN learns item representations from both session graphs and a global transition graph. 
% Distillation further enables the transfer of relational knowledge. DCD\cite{Lee2021DCD} selects ranking knowledge based on discrepancies between teacher and student rankings, while HTD~\cite{Kang2021HTD} hierarchically transfers relational topology from the teacher's representation space.
Discriminative recommenders exploit inter-item relations through transition structures, representation geometry, and ranking supervision~\cite{Wang2020GCE-GNN,Kang2021HTD,Lee2021DCD}.
In LLM-based recommendation, explicit relation modeling is introduced through attention structures and collaborative supervision. IAM~\cite{Zhang2026IAM} alternates between attention layers restricted to tokens within each item title and across items, separating title semantics from collaborative relations. TCA~\cite{Lin2026TCA} normalizes collaborative scores over prefix-compatible items and sums the probabilities of items sharing the same next token to construct soft labels for LLM-based item-title generation. Recent work also introduces relational supervision into GR. APAO~\cite{Yu2026APAO} optimizes prefix rankings with pointwise or pairwise objectives and adaptively emphasizes underperforming prefixes, while GFlowGR~\cite{Wang2026GFlowGR} incorporates collaborative signals into item-level terminal rewards and guides SID generation through flow balance objectives.
However, GR uses discrete SID tokens shared across items as its basic modeling units, rather than continuous representations of entire items or text tokens processed by pretrained LLMs. Relations among complete items must therefore be expressed through token composition and generation, yet the connection between item-level relations and token-level dependencies remains largely implicit in GR models.
% REGER combines semantic pretraining over complete SIDs with local attention constrained by SID boundaries, and derives collaborative preference supervision over legal next tokens at each prefix through hierarchical preference decomposition. This connects item-level relations with token-level dependencies and conditional preferences during generation.

\begin{figure*}
    \centering
    \includegraphics[width=\textwidth]{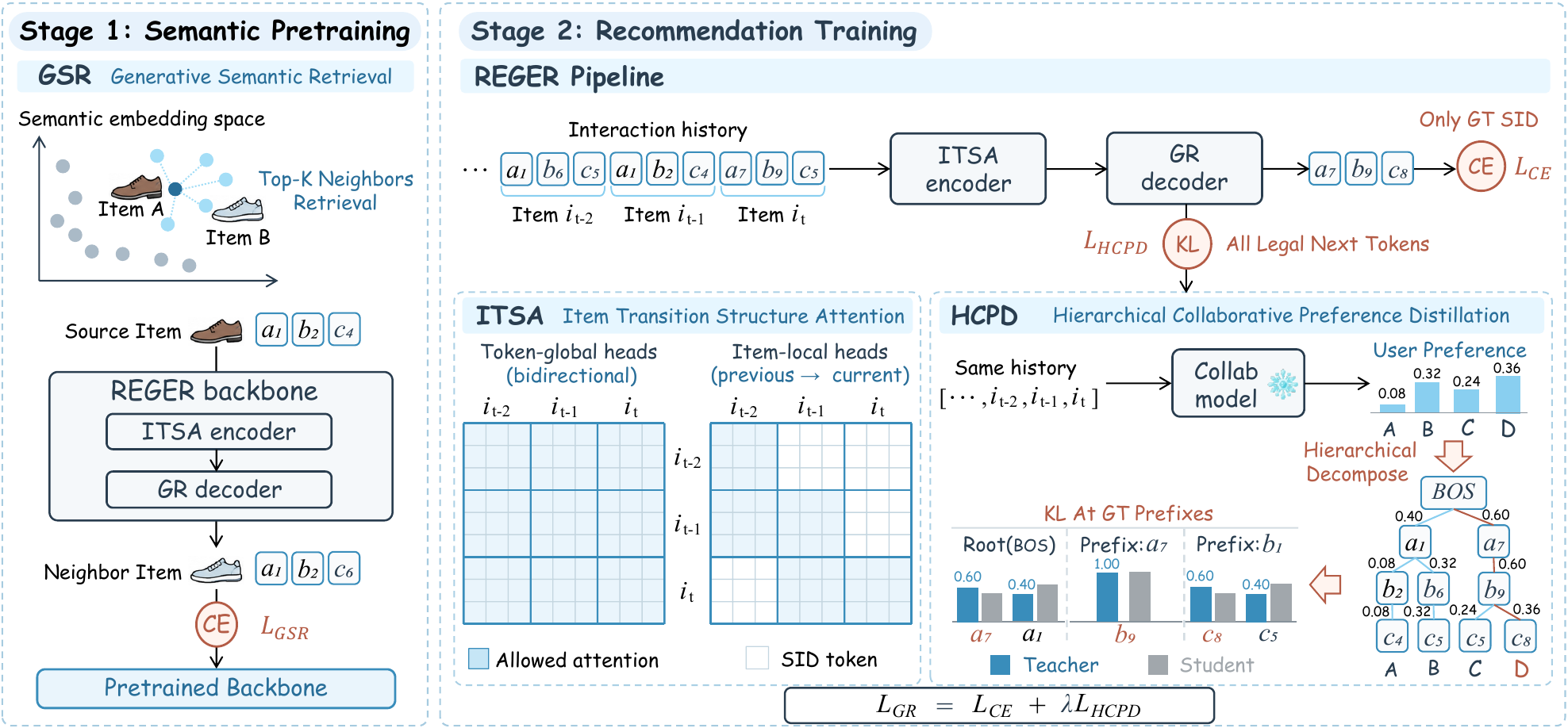}
    \caption{The overall framework of REGER.}
    % \caption{Overview of REGER. (A) GSR pretrains the backbone by generating semantic neighbors' SIDs, while (B) ITSA combines global token dependencies with local item transitions. During recommendation training, (C) HCPD hierarchically decomposes item-level collaborative preferences and distills distributions over all legal next tokens at ground-truth SID prefixes, complementing CE supervision.}
    \label{fig:framework}
\end{figure*}

\section{Preliminary: Generative Recommendation}
\label{sec:preliminaries}

\subsection{Item Tokenization}
\label{sec:item_tokenization}

Item tokenization maps continuous item representations into discrete semantic identifiers (SIDs). A widely adopted approach in academic and industrial GR applies residual quantization (RQ) to item semantic embeddings extracted using pretrained language models~\cite{Kong2025MiniOneRec,He2026PLUM}. Let $\mathbf{x}_i$ denote the semantic embedding of item $i$ in the quantization space, and let $\mathcal{C}_{\ell}=\{\mathbf{c}_{\ell,k}\}_{k=1}^{K}$ be the codebook at level $\ell$. Starting from $\mathbf{r}_0=\mathbf{x}_i$, RQ successively selects the nearest codeword and updates the residual:
\begin{equation}
s_i^{\ell}=\arg\min_{k}\|\mathbf{r}_{\ell-1}-\mathbf{c}_{\ell,k}\|_2^2,\qquad \mathbf{r}_{\ell}=\mathbf{r}_{\ell-1}-\mathbf{c}_{\ell,s_i^{\ell}}.
\label{eq:residual_quantization}
\end{equation}
After $L$ levels, the selected indices $(s_i^1,\ldots,s_i^L)$ encode item semantics in a coarse-to-fine manner.
SIDs express inter-item semantic relations through shared prefixes and codewords: shared prefixes group semantically similar items, while reused codewords serve as common semantic components in their quantized representations.
Following established practice~\cite{Rajput2024TIGER,Ju2025Handbook}, we adopt a residual-quantized variational autoencoder (RQ-VAE)~\cite{Lee2022RQVAE} for item tokenization and append an additional token to distinguish items with identical semantic codes, yielding the final SID $\mathbf{s}_i=(s_i^1,\ldots,s_i^L,s_i^{L+1})$.

\subsection{Problem Formulation}
\label{sec:problem_formulation}

Let $H_u=(i_1,\ldots,i_n)$ denote the chronologically ordered interaction history of user $u$. Each item $i$ is tokenized into a sequence of $L+1$ discrete tokens, forming its SID $\mathbf{s}_i$. Given the concatenated SIDs of $H_u$, GR predicts the next item $i_{n+1}$ by autoregressively generating its SID $\mathbf{s}_{i_{n+1}}=(s_{i_{n+1}}^1,\ldots,s_{i_{n+1}}^{L+1})$. As in large language model training, GR typically learns through next-token prediction using the \textbf{cross-entropy (CE) loss} under \textbf{teacher forcing}~\cite{Raffel2020T5,Kong2025MiniOneRec,Yu2026APAO}. Each prediction is conditioned on $H_u$ and the \textbf{ground-truth prefix} $\mathbf{s}_{i_{n+1}}^{<\ell}=(\texttt{BOS},s_{i_{n+1}}^1,\ldots,s_{i_{n+1}}^{\ell-1})$:
\begin{equation}
\begin{aligned}
\mathcal{L}_{\mathrm{GR}}=\mathcal{L}_{\mathrm{CE}}
&=-\sum_{\ell=1}^{L+1}\log p_{\theta}(s_{i_{n+1}}^{\ell}\mid H_u,\mathbf{s}_{i_{n+1}}^{<\ell})\\
&=-\sum_{\ell=1}^{L+1}\log\frac{\exp(z_{\ell,s_{i_{n+1}}^{\ell}})}{\sum_{v\in\mathcal{V}}\exp(z_{\ell,v})}.
\end{aligned}
\label{eq:gr_objective}
\end{equation}
Here, $\mathcal{V}$ denotes the SID token vocabulary, $\theta$ denotes the model parameters, and $z_{\ell,v}$ is the logit assigned to token $v$ at decoding step $\ell$.
% This objective increases the likelihood of ground-truth tokens but does not explicitly supervise fine-grained preference relations among competing tokens.
This objective uses the positive item's SID as the sole training target, without explicitly supervising fine-grained preference relations among competing tokens, leaving inter-item competitive relations to be learned implicitly.
\section{Methodology}
\label{Methodology}

% Fig.~\ref{fig:framework} presents the two-stage training pipeline of REGER. GSR first pretrains the encoder-decoder backbone to generate semantic neighbors' SIDs from source-item SIDs (Section~\ref{sec:gsr}). The encoder employs ITSA in both stages, combining token-global and item-local attention heads (Section~\ref{sec:itsa}). We derive hierarchical decompositions of item preference distributions and item-level KL divergence as the theoretical basis for HCPD (Section~\ref{sec:preference_kl_decomposition}). During recommendation training, HCPD aligns teacher and generator conditional distributions over all legal next tokens at ground-truth SID prefixes, and its distillation objective is jointly optimized with CE (Section~\ref{sec:distillation_objective}).
Fig.~\ref{fig:framework} presents the two-stage training pipeline of REGER. GSR first pretrains the encoder-decoder backbone to generate semantic neighbors' SIDs from source-item SIDs (Section~\ref{sec:gsr}). The encoder employs ITSA in both stages, combining token-global and item-local attention heads (Section~\ref{sec:itsa}). For HCPD, we first derive hierarchical decompositions of item preference distributions and item-level KL divergence. Building on these results, we align the teacher and generator conditional distributions over all legal next tokens at ground-truth SID prefixes and jointly optimize this distillation objective with CE during recommendation training (Section~\ref{sec:hcpd}).

\subsection{Generative Semantic Retrieval}
\label{sec:gsr}

% 额外补充一段分析和引入
Recent analysis shows that SIDs preserve coarse item organization but do not fully retain fine-grained neighborhood relations in the original continuous representation space~\cite{Wang2026Understanding}. 
% Although prior work reuses pretrained quantization codebook embeddings to enrich SID token representations~\cite{Liu2026DECOR}, a gap remains between quantized and original item semantic representations, and the semantic reconstruction objective used to pretrain the codebooks differs from the recommendation objective.
Although prior work reuses pretrained quantization codebook embeddings to enrich SID token representations~\cite{Liu2026DECOR}, these embeddings are pretrained for semantic reconstruction, which does not explicitly constrain fine-grained inter-item neighborhood relations 
% in the original semantic space 
or directly optimize the recommendation objective.
Moreover, advances in LLM-based GR highlight the value of aligning SID token embeddings with a pretrained LLM's semantic space before recommendation fine-tuning~\cite{TS-Rec2026Feng}. However, encoder-decoder GR models initialized from scratch lack such a pretrained semantic space.

Indeed, item content provides cues for identifying inter-item semantic relations even before user interactions are observed. Building on this intuition, we use semantic neighborhoods derived from item content to construct inter-item supervision without relying on user interaction histories, enabling the GR model to learn semantic relations among items before learning user preferences.

% To this end, we design a \emph{Generative Semantic Retrieval} (GSR) pretraining task. For each item $i$, we retrieve its top-$k$ neighbors in the continuous semantic embedding space and denote the resulting set by $\mathcal{N}_k(i)$. For each source-neighbor pair $(i,j)$ with $j \in \mathcal{N}_k(i)$, the encoder takes $\mathbf{s}_i$ as input, and the decoder autoregressively predicts the neighbor SID $\mathbf{s}_j$.
To this end, we design a \emph{Generative Semantic Retrieval} (GSR) pretraining task. 
% Let $\mathbf{e}_i$ denote the original text embedding of item $i$ produced by the pretrained language model before quantization.
Let $\mathbf{e}_i$ denote the text embedding of item $i$ used as input to the RQ-VAE tokenizer.
We retrieve its top-$k$ semantic neighbors using cosine similarity:
\begin{equation}
\mathcal{N}_k(i)
=
\operatorname*{TopK}_{j\in\mathcal{I}\setminus\{i\}}
\frac{\mathbf{e}_i^{\top}\mathbf{e}_j}
{\|\mathbf{e}_i\|_2\|\mathbf{e}_j\|_2},
\label{eq:semantic_neighbors}
\end{equation}
where $\mathcal{I}$ denotes the item catalog and $\operatorname{TopK}$ returns the item indices with the $k$ highest similarity scores. Each source-neighbor pair $(i,j)$ with $j\in\mathcal{N}_k(i)$ forms a pretraining instance: the encoder takes SID $\mathbf{s}_i$ as input, and the decoder generates the neighbor SID $\mathbf{s}_j$.

% Following the teacher-forcing formulation in Eq.~\eqref{eq:gr_objective}, we optimize GSR using the CE loss for each source-neighbor pair:
% \begin{equation}
% \mathcal{L}_{\mathrm{GSR}}(i,j)
% =
% -\sum_{\ell=1}^{L+1}
% \log p_{\theta}\!\left(
% s_j^{\ell}\mid\mathbf{s}_i,\mathbf{s}_j^{<\ell}
% \right).
% \label{eq:gsr}
% \end{equation}
Following the teacher-forcing formulation in Eq.~\eqref{eq:gr_objective}, we optimize GSR over all source-neighbor pairs using the CE loss:
\begin{equation}
\mathcal{L}_{\mathrm{GSR}}
=
-\sum_{i\in\mathcal{I}}
\sum_{j\in\mathcal{N}_k(i)}
\sum_{\ell=1}^{L+1}
\log p_{\theta}\!\left(
s_j^{\ell}\mid\mathbf{s}_i,\mathbf{s}_j^{<\ell}
\right).
\label{eq:gsr}
\end{equation}

Pretraining jointly updates the SID token embeddings and the encoder-decoder backbone, and the resulting parameters initialize subsequent recommendation training. 
To support the transfer of semantic priors to collaborative preference learning, we select the pretraining checkpoint based on recommendation performance on the validation set.
In addition, constructing multiple source-neighbor pairs with each item's complete SID as input encourages the encoder to learn intra-item token dependencies through self-attention during pretraining, while providing semantic supervision for items with sparse interactions.

\subsection{Item Transition Structure Attention}
\label{sec:itsa}

GSR provides a semantic initialization, while history encoding further requires modeling relations among items in successive interactions. In GR, interaction histories are typically flattened into SID token sequences and encoded through bidirectional self-attention, which jointly handles intra-item dependencies and inter-item behavioral transitions. SST~\cite{Miao2026SST} observes that composing item representations from their constituent tokens can consume excessive attention, resulting in \emph{intra-item attention overload}. 
% We consider it necessary to retain bidirectional interactions, since the tokens in the history must jointly represent complete items, whereas the hierarchy within SIDs does not correspond to the transition structure of user behavior. 
However, user interactions occur at the level of complete items and exhibit temporal predecessor and successor relations, calling for explicit modeling of directed inter-item transitions. 
% Meanwhile, prior work highlights recent interactions as important cues for users' current interests, alongside the complementary value of broader historical context~\cite{Kang2018SASRec}. 
Meanwhile, recent work on LLM-based recommendation highlights the complementary value of recent interactions for capturing current interests and broader historical context for modeling longer-term preferences~\cite{Qiao2025LLM4BSR,SIREN2026ICLR}.
Together, these considerations motivate explicitly emphasizing directed transitions between adjacent items while preserving intra-item semantic integration and global historical dependencies.

Inspired by these observations, we design ITSA to combine \emph{token-global} and \emph{item-local} attention heads within each encoder layer of the GR backbone, with a specified ratio between the two head types. We retain bidirectional interactions within each SID because its tokens jointly represent a complete item, and their quantization order does not encode the temporal order of user interactions. The two types of heads differ in their attention masks. Following the T5 backbone~\cite{Raffel2020T5}, the output of head $h$ is computed as
\begin{equation}
\mathbf{O}_h
=
\operatorname{softmax}\!\left(
\mathbf{Q}_h\mathbf{K}_h^{\top}
+\mathbf{B}_h+\mathbf{M}_h
\right)\mathbf{V}_h,
\label{eq:itsa_attention}
\end{equation}
where $\mathbf{Q}_h$, $\mathbf{K}_h$, and $\mathbf{V}_h$ are the query, key, and value projections of the encoder hidden states, $\mathbf{B}_h$ is the relative position bias, and $\mathbf{M}_h$ is the attention mask.

\textbf{Token-global heads.}
These heads apply standard bidirectional self-attention over the entire interaction history. For valid query and key positions $p$ and $q$, we set $M^{\mathrm{global}}_{pq}=0$, allowing each token to integrate intra-item information and capture inter-item dependencies across the full history. Direct interactions between tokens from nonadjacent items provide broader context for modeling user preferences, complementing the directed transitions captured by item-local heads.

\textbf{Item-local heads.}
These heads restrict attention to the complete SIDs of the current item and its immediate predecessor. Since the internal order of SID tokens does not encode predecessor-successor relations between user interactions, we preserve bidirectional interactions within each SID, allowing every token to incorporate information from the other tokens of its own item and thereby supporting semantic integration within both the preceding and current items. Let $g(p)$ map token position $p$ to the corresponding item position in $H_u$. The local mask is defined as
\begin{equation}
M^{\mathrm{local}}_{pq}
=
\begin{cases}
0, & g(q)\in\{g(p)-1,g(p)\},\\
-\infty, & \text{otherwise}.
\end{cases}
\label{eq:itsa_local_mask}
\end{equation}
Thus, for $t>1$, each token of item $i_t$ can attend only to tokens of $i_t$ and its immediate predecessor $i_{t-1}$, with attention weights normalized jointly over the two complete SIDs. For the first item, the window contains only its own SID. This mask provides a directed local pathway from $i_{t-1}$ to $i_t$, introducing an explicit structural bias toward short-range inter-item transitions.

The outputs of both head types are concatenated and passed through the standard output projection to integrate global historical context with local transition structure. All heads mask padding positions. Both head types have the same parameter count per head, and ITSA changes only the masks of the existing heads, leaving the GR backbone's parameter count unchanged. As shown in Fig.~\ref{fig:framework}, the same ITSA encoder is used in both stages: during GSR, both head types attend within the single source-item SID, whereas during recommendation training, item-local heads additionally attend to the immediately preceding SID.

\subsection{Hierarchical Collaborative Preference Distillation}
\label{sec:hcpd}

In this section, we first analyze hierarchical decompositions of item preference distributions and their KL divergence over the SID prefix tree, then introduce HCPD as a tractable distillation objective along ground-truth SID paths.

\subsubsection{Preference and KL Decomposition}
\label{sec:preference_kl_decomposition}

Item SIDs can be organized into a prefix tree, where each leaf corresponds to a unique item and the path from the root specifies its complete SID~\cite{Hou2026Latte}. During decoding, the root corresponds to the initial \texttt{BOS} token, and each internal node represents a SID prefix.

For a prefix $\mathbf{a}$, let $\mathcal{I}(\mathbf{a})$ denote the set of items whose SIDs extend this prefix, with $\mathcal{I}(\texttt{BOS})=\mathcal{I}$. For each nonterminal prefix, let $\mathcal{V}(\mathbf{a})$ denote the set of legal next tokens, and let $\mathbf{a}\oplus v$ denote the prefix obtained by appending token $v$. The sets $\{\mathcal{I}(\mathbf{a}\oplus v)\}_{v\in\mathcal{V}(\mathbf{a})}$ form a disjoint partition of $\mathcal{I}(\mathbf{a})$.

Given a user interaction history $H_u$, let $P$ denote a strictly positive item preference distribution over $\mathcal{I}$. We omit $H_u$ from the notation for brevity. The probability mass of a prefix and the conditional probability of a legal next token are defined as
\begin{equation}
m_P(\mathbf{a})
=
\sum_{i\in\mathcal{I}(\mathbf{a})}P(i),
\qquad
P(v\mid\mathbf{a})
=
\frac{m_P(\mathbf{a}\oplus v)}{m_P(\mathbf{a})}.
\label{eq:prefix_preference}
\end{equation}
Each conditional distribution therefore assigns probability mass to the item subsets corresponding to alternative next tokens.

\begin{theorem}[Exact preference factorization]
\label{thm:preference_factorization}
For any item $i$ with SID $\mathbf{s}_i=(s_i^1,\ldots,s_i^{L+1})$, its preference probability can be exactly factorized as
\begin{equation}
P(i)
=
\prod_{\ell=1}^{L+1}
P\!\left(s_i^\ell\mid\mathbf{s}_i^{<\ell}\right).
\label{eq:preference_factorization}
\end{equation}
\end{theorem}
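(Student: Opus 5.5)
The plan is a telescoping-product argument along the SID path of $i$. First, I fix item $i$ and write the chain of nested prefixes
$\mathbf{s}_i^{<1}=\texttt{BOS}$, $\mathbf{s}_i^{<\ell+1}=\mathbf{s}_i^{<\ell}\oplus s_i^\ell$, for $\ell=1,\ldots,L+1$. Each of these prefixes lies on the root-to-leaf path of $i$ in the prefix tree, so $i\in\mathcal{I}(\mathbf{s}_i^{<\ell})$ for every $\ell$. Because $P$ is strictly positive, this gives $m_P(\mathbf{s}_i^{<\ell})\ge P(i)>0$. Hence every conditional $P(s_i^\ell\mid\mathbf{s}_i^{<\ell})$ in Eq.~\eqref{eq:prefix_preference} is well defined: the denominator never vanishes, and $s_i^\ell\in\mathcal{V}(\mathbf{s}_i^{<\ell})$ since it is a legal continuation realized by $i$ itself.

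Next, I substitute the definition into the product:
\begin{equation*}
\prod_{\ell=1}^{L+1}P\!\left(s_i^\ell\mid\mathbf{s}_i^{<\ell}\right)=\prod_{\ell=1}^{L+1}\frac{m_P(\mathbf{s}_i^{<\ell+1})}{m_P(\mathbf{s}_i^{<\ell})}=\frac{m_P(\mathbf{s}_i^{<L+2})}{m_P(\texttt{BOS})}.
\end{equation*}
The product telescopes, with consecutive numerators and denominators cancelling. This step uses only the positivity established above.

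Finally, I evaluate the two endpoints. For the denominator, $\mathcal{I}(\texttt{BOS})=\mathcal{I}$, so $m_P(\texttt{BOS})=\sum_{j\in\mathcal{I}}P(j)=1$. For the numerator, the full prefix $\mathbf{s}_i^{<L+2}$ is the complete SID $\mathbf{s}_i$, which is a leaf, so $\mathcal{I}(\mathbf{s}_i)=\{i\}$. This holds because the appended $(L+1)$-th disambiguation token makes SIDs unique across items (Section~\ref{sec:item_tokenization}). Therefore $m_P(\mathbf{s}_i)=P(i)$, which yields Eq.~\eqref{eq:preference_factorization}.

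The only step needing care is the leaf identity $\mathcal{I}(\mathbf{s}_i)=\{i\}$. If two items shared a full SID, the product would return their combined mass rather than $P(i)$. I would therefore state explicitly that SID uniqueness, guaranteed by the disambiguation token, is what makes the tree's leaves correspond one-to-one with items. The telescoping itself is routine.
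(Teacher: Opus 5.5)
Your proposal is correct and matches the paper's proof: it substitutes $P(v\mid\mathbf{a})=m_P(\mathbf{a}\oplus v)/m_P(\mathbf{a})$, telescopes the product along the root-to-leaf path of $i$, and uses that the root has mass $1$ and the leaf of $i$ has mass $P(i)$. Your extra checks make explicit what the paper's two-line proof leaves implicit, namely that strict positivity keeps every denominator nonzero and that the disambiguation token gives $\mathcal{I}(\mathbf{s}_i)=\{i\}$.
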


This factorization expresses item preferences as conditional distributions over legal next tokens. Applying the chain rule of KL divergence to the prefix tree yields the following decomposition~\cite{Cover2026Elements}. Let $Q$ denote another strictly positive item distribution for the same history, with its prefix masses and conditional distributions defined in the same way. Let $\mathcal{P}$ denote the set of all nonterminal prefixes, including the root.

\begin{theorem}[Hierarchical KL decomposition]
\label{thm:kl_decomposition}
The KL divergence between $P$ and $Q$ can be exactly decomposed as
\begin{equation}
D_{\mathrm{KL}}(P\|Q)
=
\sum_{\mathbf{a}\in\mathcal{P}}
m_P(\mathbf{a})\,
D_{\mathrm{KL}}\!\left(
P(\cdot\mid\mathbf{a})
\|
Q(\cdot\mid\mathbf{a})
\right).
\label{eq:kl_decomposition}
\end{equation}
\end{theorem}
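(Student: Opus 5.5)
The plan is to prove Theorem~\ref{thm:kl_decomposition} by expanding the item-level log-ratio with Theorem~\ref{thm:preference_factorization} and then regrouping the resulting sum over the prefix tree. Both $P$ and $Q$ are strictly positive, so every prefix mass $m_P(\mathbf{a})$ and $m_Q(\mathbf{a})$ is positive. Hence all conditionals $P(v\mid\mathbf{a})$ and $Q(v\mid\mathbf{a})$ are well defined and positive, and no logarithm is undefined. Applying Theorem~\ref{thm:preference_factorization} to both $P$ and $Q$ gives, for every item $i$,
\begin{equation*}
\log\frac{P(i)}{Q(i)}=\sum_{\ell=1}^{L+1}\log\frac{P(s_i^\ell\mid\mathbf{s}_i^{<\ell})}{Q(s_i^\ell\mid\mathbf{s}_i^{<\ell})}.
\end{equation*}
Multiplying by $P(i)$ and summing over $i\in\mathcal{I}$ writes $D_{\mathrm{KL}}(P\|Q)$ as a double sum over items and depths.

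Next we exchange the order of summation. Each pair $(i,\ell)$ corresponds to exactly one nonterminal prefix $\mathbf{a}=\mathbf{s}_i^{<\ell}\in\mathcal{P}$ together with the next token $v=s_i^\ell\in\mathcal{V}(\mathbf{a})$. Conversely, for a fixed $\mathbf{a}\in\mathcal{P}$, the items contributing at that prefix are exactly those in $\mathcal{I}(\mathbf{a})$. These items are partitioned by their next token into the sets $\mathcal{I}(\mathbf{a}\oplus v)$. The double sum therefore equals
\begin{equation*}
\sum_{\mathbf{a}\in\mathcal{P}}\sum_{v\in\mathcal{V}(\mathbf{a})}\Bigl(\sum_{i\in\mathcal{I}(\mathbf{a}\oplus v)}P(i)\Bigr)\log\frac{P(v\mid\mathbf{a})}{Q(v\mid\mathbf{a})}.
\end{equation*}
The inner bracket is $m_P(\mathbf{a}\oplus v)$. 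By Eq.~\eqref{eq:prefix_preference} it equals $m_P(\mathbf{a})P(v\mid\mathbf{a})$. Factoring out $m_P(\mathbf{a})$ leaves exactly $D_{\mathrm{KL}}(P(\cdot\mid\mathbf{a})\|Q(\cdot\mid\mathbf{a}))$, which gives Eq.~\eqref{eq:kl_decomposition}.

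The main obstacle is the bookkeeping in the reindexing step. We must check that the map $(i,\ell)\mapsto(\mathbf{a},v,i)$ is a bijection onto the triples with $\mathbf{a}\in\mathcal{P}$, $v\in\mathcal{V}(\mathbf{a})$ and $i\in\mathcal{I}(\mathbf{a}\oplus v)$. This uses the fact that all SIDs have length $L+1$, so every prefix of length at most $L$ along a path is nonterminal. It also uses the uniqueness of SIDs, so that leaves correspond to distinct items. Finally, it uses the disjoint-partition property of the sets $\mathcal{I}(\mathbf{a}\oplus v)$ stated before Eq.~\eqref{eq:prefix_preference}. Prefixes with a single legal child contribute zero, consistent with the formula. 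Since all sums are finite, interchanging their order needs no further justification.
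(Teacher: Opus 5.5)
Your proposal is correct and follows essentially the same route as the paper's proof: you expand the item-level KL using the factorization of Theorem~\ref{thm:preference_factorization}, group terms by tree edges so that the items traversing edge $(\mathbf{a},v)$ contribute total mass $m_P(\mathbf{a}\oplus v)$, and then substitute $m_P(\mathbf{a}\oplus v)=m_P(\mathbf{a})P(v\mid\mathbf{a})$. Your additional checks on positivity and on the bijection of the reindexing make explicit details that the paper's proof leaves implicit.
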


Proofs of both theorems are provided in Appendix~\ref{app:hcpd_theory}.

The item-level KL divergence is therefore a weighted sum of conditional KL divergences over all nonterminal prefixes, where each weight is the corresponding prefix probability mass under $P$. These results connect item-level preferences to conditional next-token distributions and motivate conditional preference matching. HCPD uses a tractable surrogate that evaluates these conditional terms along ground-truth SID paths, with its expected prefix weighting clarified below.

\subsubsection{Distillation and Training Objectives}
\label{sec:distillation_objective}

We use a frozen discriminative recommender as the teacher to provide collaborative preferences over candidate items.
For this purpose, we seek a teacher with distinguishable item representations that capture user-specific preference differences, so that its scores reflect the relative relevance of competing candidates given each user's interaction history.
Following Pctx~\cite{Zhong2025Pctx} in the choice of auxiliary recommender, we adopt DuoRec~\cite{Qiu2022DuoRec}, which uses contrastive learning to mitigate representation degeneration. We analyze the effect of teacher choice in Appendix~\ref{sec:teacher_analysis}.

Given the same interaction history $H_u$, let $r_u(i)$ denote its score for item $i$. The teacher's item preference distribution is
\begin{equation}
P_T(i)
=
\frac{\exp(r_u(i))}
{\sum_{j\in\mathcal{I}}\exp(r_u(j))}.
\label{eq:teacher_item_preference}
\end{equation}
Following Eq.~\eqref{eq:prefix_preference}, we obtain the teacher's conditional preference over each legal next token:
\begin{equation}
q_T(v\mid\mathbf{a})
=
\frac{
\sum_{i\in\mathcal{I}(\mathbf{a}\oplus v)}P_T(i)
}{
\sum_{i\in\mathcal{I}(\mathbf{a})}P_T(i)
},
\qquad v\in\mathcal{V}(\mathbf{a}).
\label{eq:teacher_token_preference}
\end{equation}
This conditional distribution transfers the teacher's relative preferences among complete items to the alternative branches available at each prefix.

For distillation, we normalize the generator's probabilities over the same legal next-token set:
\begin{equation}
q_{\theta}(v\mid\mathbf{a})
=
\frac{
p_{\theta}(v\mid H_u,\mathbf{a})
}{
\sum_{w\in\mathcal{V}(\mathbf{a})}
p_{\theta}(w\mid H_u,\mathbf{a})
},
\qquad v\in\mathcal{V}(\mathbf{a}).
\label{eq:student_token_preference}
\end{equation}
These normalized transitions induce an item distribution
$Q_{\theta}(i)=\prod_{\ell=1}^{L+1}q_{\theta}(s_i^\ell\mid\mathbf{s}_i^{<\ell})$
over the catalog, allowing Theorem~\ref{thm:kl_decomposition} to be applied to $P_T$ and $Q_{\theta}$.
This distribution is used for distillation analysis, while inference uses the full-vocabulary probabilities $p_\theta$.

\textbf{Hierarchical Collaborative Preference Distillation.}
For the target item $i_{n+1}$, let $\mathbf{a}_{\ell}=\mathbf{s}_{i_{n+1}}^{<\ell}$ denote its ground-truth prefix at decoding step $\ell$.
We align the teacher and generator distributions at these prefixes using
\begin{equation}
\mathcal{L}_{\mathrm{HCPD}}
=
\frac{1}{L+1}
\sum_{\ell=1}^{L+1}
D_{\mathrm{KL}}\!\left(
q_T(\cdot\mid\mathbf{a}_{\ell})
\|
q_{\theta}(\cdot\mid\mathbf{a}_{\ell})
\right).
\label{eq:hcpd_objective}
\end{equation}
% Each conditional KL compares the full distributions over all legal next tokens, including alternatives to the ground-truth token. Legal branches represent disjoint subsets of catalog items, so matching their probabilities provides explicit supervision on inter-item collaborative preferences. The target SID selects which prefixes receive supervision, while the teacher specifies preferences among their competing branches. For a fixed history, the expected loss weights each conditional KL by its prefix probability under the observed-target distribution, scaled by $1/(L+1)$. These probabilities generally differ from the teacher prefix masses in Eq.~\eqref{eq:kl_decomposition} with $P=P_T$. Thus, the ground-truth-path objective does not generally reproduce the scaled full-tree KL, as detailed in Appendix~\ref{app:full_tree_path_distillation}.
Each conditional KL compares the full distributions over all legal next tokens, including alternatives to the ground-truth token. Legal branches represent disjoint subsets of catalog items, so matching their probabilities provides explicit supervision on inter-item collaborative preferences. The target SID selects which prefixes receive supervision, while the teacher specifies preferences among their competing branches. Ground-truth paths align distillation with CE, concentrating preference supervision on the prefixes of observed next-item targets. Given a fixed interaction history, the expected loss weights each conditional KL by its prefix probability under the observed-target distribution, scaled by $1/(L+1)$. These probabilities generally differ from the teacher prefix masses in Eq.~\eqref{eq:kl_decomposition} with $P=P_T$. Thus, the ground-truth-path objective does not generally reproduce the scaled full-tree KL, as detailed in Appendix~\ref{app:full_tree_path_distillation}.

\textbf{Joint training objective.}
We jointly optimize the HCPD objective and the CE objective in Eq.~\eqref{eq:gr_objective}:
\begin{equation}
\mathcal{L}_{\mathrm{GR}}
=
\mathcal{L}_{\mathrm{CE}}
+
\lambda\mathcal{L}_{\mathrm{HCPD}},
\label{eq:joint_recommendation_objective}
\end{equation}
where $\lambda$ controls distillation strength.
% Notably, CE serves as the primary recommendation objective, with HCPD providing auxiliary preference supervision during training.
At inference, we use unconstrained beam search with cumulative token log-probabilities from the full-vocabulary softmax, without the teacher.
\begin{table*}[t]
\centering
\caption{
Performance comparison on the Amazon Reviews 2014 benchmarks.
Results are reported as mean $\pm$ (100 $\times$ sample standard deviation)
over three random seeds.
Bold and underlined values indicate the best and second-best results, respectively.
DuoRec serves as the frozen teacher for HCPD.
Results on the Amazon Reviews 2023 benchmarks are reported in Appendix~\ref{app:additional_results}.
}
\label{tab:main_results}

\resizebox{\linewidth}{!}{%
\begin{tabular}{l*{12}{c}}
\toprule
\multirow{2}{*}{\textbf{Methods}}
& \multicolumn{4}{c}{\textbf{Beauty}}
& \multicolumn{4}{c}{\textbf{Sports}}
& \multicolumn{4}{c}{\textbf{Toys}} \\
\cmidrule(lr){2-5}
\cmidrule(lr){6-9}
\cmidrule(lr){10-13}
& R@5 & R@10 & N@5 & N@10
& R@5 & R@10 & N@5 & N@10
& R@5 & R@10 & N@5 & N@10 \\

\midrule
\multicolumn{13}{c}{\emph{Discriminative Recommendation}} \\

SASRec
& \mstd{0.0411}{0.24} & \mstd{0.0641}{0.27} & \mstd{0.0260}{0.16} & \mstd{0.0333}{0.17}
& \mstd{0.0224}{0.09} & \mstd{0.0354}{0.08} & \mstd{0.0146}{0.07} & \mstd{0.0188}{0.08}
& \mstd{0.0497}{0.11} & \mstd{0.0727}{0.21} & \mstd{0.0333}{0.11} & \mstd{0.0408}{0.13} \\

FDSA
& \mstd{0.0240}{0.17} & \mstd{0.0397}{0.26} & \mstd{0.0152}{0.08} & \mstd{0.0202}{0.11}
& \mstd{0.0153}{0.18} & \mstd{0.0255}{0.27} & \mstd{0.0101}{0.13} & \mstd{0.0133}{0.15}
& \mstd{0.0321}{0.05} & \mstd{0.0479}{0.18} & \mstd{0.0217}{0.09} & \mstd{0.0268}{0.13} \\

BERT4Rec
& \mstd{0.0272}{0.06} & \mstd{0.0429}{0.03} & \mstd{0.0176}{0.06} & \mstd{0.0226}{0.05}
& \mstd{0.0118}{0.01} & \mstd{0.0202}{0.07} & \mstd{0.0073}{0.01} & \mstd{0.0101}{0.03}
& \mstd{0.0224}{0.13} & \mstd{0.0360}{0.19} & \mstd{0.0147}{0.09} & \mstd{0.0191}{0.11} \\

S$^3$-Rec
& \mstd{0.0378}{0.09} & \mstd{0.0634}{0.19} & \mstd{0.0236}{0.08} & \mstd{0.0319}{0.11}
& \mstd{0.0235}{0.12} & \mstd{0.0379}{0.21} & \mstd{0.0155}{0.12} & \mstd{0.0201}{0.15}
& \mstd{0.0362}{0.26} & \mstd{0.0584}{0.34} & \mstd{0.0229}{0.18} & \mstd{0.0300}{0.19} \\

\rowcolor{teacherblue}
DuoRec
& \mstd{\underline{0.0580}}{0.03} & \mstd{\textbf{0.0875}}{0.11} & \mstd{\underline{0.0358}}{0.04} & \mstd{\underline{0.0454}}{0.00}
& \mstd{\underline{0.0326}}{0.12} & \mstd{\underline{0.0493}}{0.19} & \mstd{0.0194}{0.02} & \mstd{0.0248}{0.03}
& \mstd{\textbf{0.0664}}{0.09} & \mstd{\textbf{0.0961}}{0.07} & \mstd{\textbf{0.0397}}{0.08} & \mstd{\underline{0.0493}}{0.04} \\

\midrule
\multicolumn{13}{c}{\emph{Generative Recommendation}} \\

TIGER
& \mstd{0.0435}{0.09} & \mstd{0.0673}{0.16} & \mstd{0.0291}{0.09} & \mstd{0.0367}{0.11}
& \mstd{0.0283}{0.17} & \mstd{0.0432}{0.26} & \mstd{0.0183}{0.12} & \mstd{0.0231}{0.15}
& \mstd{0.0422}{0.45} & \mstd{0.0659}{0.57} & \mstd{0.0279}{0.32} & \mstd{0.0355}{0.35} \\

LETTER
& \mstd{0.0476}{0.13} & \mstd{0.0723}{0.11} & \mstd{0.0329}{0.11} & \mstd{0.0409}{0.08}
& \mstd{0.0264}{0.09} & \mstd{0.0409}{0.15} & \mstd{0.0170}{0.08} & \mstd{0.0216}{0.10}
& \mstd{0.0424}{0.26} & \mstd{0.0641}{0.24} & \mstd{0.0279}{0.16} & \mstd{0.0348}{0.16} \\

EAGER
& \mstd{0.0448}{0.13} & \mstd{0.0665}{0.06} & \mstd{0.0304}{0.12} & \mstd{0.0373}{0.10}
& \mstd{0.0240}{0.02} & \mstd{0.0377}{0.02} & \mstd{0.0154}{0.02} & \mstd{0.0198}{0.02}
& \mstd{0.0462}{0.14} & \mstd{0.0656}{0.34} & \mstd{0.0323}{0.02} & \mstd{0.0385}{0.08} \\

UNGER
& \mstd{0.0433}{0.11} & \mstd{0.0677}{0.01} & \mstd{0.0283}{0.06} & \mstd{0.0362}{0.03}
& \mstd{0.0259}{0.04} & \mstd{0.0408}{0.11} & \mstd{0.0166}{0.04} & \mstd{0.0214}{0.06}
& \mstd{0.0416}{0.09} & \mstd{0.0665}{0.10} & \mstd{0.0268}{0.06} & \mstd{0.0348}{0.04} \\

ReSID
& \mstd{0.0475}{0.08} & \mstd{0.0739}{0.08} & \mstd{0.0321}{0.02} & \mstd{0.0405}{0.03}
& \mstd{0.0273}{0.06} & \mstd{0.0432}{0.03} & \mstd{0.0178}{0.07} & \mstd{0.0229}{0.06}
& \mstd{0.0407}{0.14} & \mstd{0.0635}{0.26} & \mstd{0.0267}{0.12} & \mstd{0.0341}{0.16} \\

ETEGRec
& \mstd{0.0497}{0.14} & \mstd{0.0759}{0.23} & \mstd{0.0334}{0.07} & \mstd{0.0419}{0.10}
& \mstd{0.0287}{0.07} & \mstd{0.0455}{0.05} & \mstd{0.0183}{0.03} & \mstd{0.0237}{0.02}
& \mstd{0.0438}{0.10} & \mstd{0.0682}{0.13} & \mstd{0.0283}{0.09} & \mstd{0.0362}{0.08} \\

DIGER
& \mstd{0.0460}{0.13} & \mstd{0.0694}{0.10} & \mstd{0.0302}{0.11} & \mstd{0.0377}{0.10}
& \mstd{0.0263}{0.15} & \mstd{0.0409}{0.11} & \mstd{0.0169}{0.10} & \mstd{0.0216}{0.09}
& \mstd{0.0418}{0.18} & \mstd{0.0660}{0.13} & \mstd{0.0271}{0.12} & \mstd{0.0349}{0.10} \\

ActionPiece
& \mstd{0.0507}{0.07} & \mstd{0.0772}{0.03} & \mstd{0.0339}{0.04} & \mstd{0.0424}{0.02}
& \mstd{0.0304}{0.02} & \mstd{0.0473}{0.05} & \mstd{0.0200}{0.01} & \mstd{\underline{0.0254}}{0.02}
& \mstd{0.0462}{0.12} & \mstd{0.0729}{0.11} & \mstd{0.0303}{0.06} & \mstd{0.0389}{0.06} \\

TIGER-SST
& \mstd{0.0487}{0.11} & \mstd{0.0722}{0.13} & \mstd{0.0336}{0.05} & \mstd{0.0411}{0.05}
& \mstd{0.0287}{0.12} & \mstd{0.0445}{0.17} & \mstd{0.0191}{0.07} & \mstd{0.0242}{0.08}
& \mstd{0.0499}{0.10} & \mstd{0.0742}{0.02} & \mstd{0.0344}{0.09} & \mstd{0.0422}{0.06} \\

\midrule
\multicolumn{13}{c}{\emph{Hybrid Recommendation}} \\

LIGER
& \mstd{0.0532}{0.06} & \mstd{0.0830}{0.10} & \mstd{0.0348}{0.02} & \mstd{0.0444}{0.03}
& \mstd{0.0304}{0.04} & \mstd{0.0480}{0.04} & \mstd{0.0197}{0.02} & \mstd{0.0253}{0.02}
& \mstd{0.0510}{0.08} & \mstd{0.0838}{0.15} & \mstd{0.0314}{0.07} & \mstd{0.0420}{0.11} \\

MemGen-GR
& \mstd{0.0502}{0.20} & \mstd{0.0748}{0.20} & \mstd{0.0342}{0.11} & \mstd{0.0421}{0.11}
& \mstd{0.0300}{0.17} & \mstd{0.0462}{0.24} & \mstd{\underline{0.0201}}{0.12} & \mstd{0.0253}{0.14}
& \mstd{0.0546}{0.34} & \mstd{0.0796}{0.33} & \mstd{0.0383}{0.25} & \mstd{0.0464}{0.25} \\

\midrule
\multicolumn{13}{c}{\emph{Ours (GR)}} \\

\rowcolor{regerblue}
\textbf{REGER}
& \mstd{\textbf{0.0594}}{0.05} & \mstd{\underline{0.0870}}{0.19} & \mstd{\textbf{0.0395}}{0.08} & \mstd{\textbf{0.0484}}{0.13}
& \mstd{\textbf{0.0357}}{0.06} & \mstd{\textbf{0.0545}}{0.12} & \mstd{\textbf{0.0235}}{0.07} & \mstd{\textbf{0.0296}}{0.08}
& \mstd{\underline{0.0612}}{0.13} & \mstd{\underline{0.0934}}{0.07} & \mstd{\underline{0.0392}}{0.08} & \mstd{\textbf{0.0496}}{0.07} \\

\bottomrule
\end{tabular}%
}
\end{table*}

\section{Experiments}

\subsection{Experimental Setting}
\textbf{Datasets.} We evaluate REGER on Beauty, Sports, and Toys from Amazon Reviews 2014~\cite{2014Amazon}. Following standard GR protocols~\cite{Rajput2024TIGER,Hou2025Actionpiece}, we apply 5-core filtering so that each remaining user and item has at least five interactions. For each user, interactions are ordered chronologically, and input histories are truncated to the 20 most recent interactions for all methods. Dataset statistics and more details are provided in Appendix~\ref{app:datasets}. We further evaluate REGER on Instruments, Scientific, and Games from Amazon Reviews 2023~\cite{2023Amazon}, with results reported in Appendix~\ref{app:additional_results}.

\noindent\textbf{Evaluation settings.} We evaluate all methods using Recall@$K$ (R@$K$) and NDCG@$K$ (N@$K$) with $K \in \{5,10\}$. To ensure robust results, each test is repeated three times with different random seeds $\{2024,2025,2026\}$.

\noindent\textbf{Baselines.} On the Amazon Reviews 2014 benchmarks, we compare REGER with three groups of baselines:
(1) \emph{Discriminative recommendation:} SASRec~\cite{Kang2018SASRec}, FDSA~\cite{Zhang2019FDSA}, BERT4Rec~\cite{Sun2019BERT4Rec}, S$^3$-Rec~\cite{Zhou2020S3Rec} and DuoRec~\cite{Qiu2022DuoRec}; (2) \emph{Generative recommendation:} TIGER~\cite{Rajput2024TIGER}, LETTER~\cite{Wang2024LETTER}, EAGER~\cite{Wang2024EAGER}, UNGER~\cite{Xiao2026UNGER}, ReSID~\cite{Liang2026ReSID}, ETEGRec~\cite{Liu2025ETEGRec}, DIGER~\cite{Fu2026DIGER}, ActionPiece~\cite{Hou2025Actionpiece}, and TIGER-SST~\cite{Miao2026SST}; (3) \emph{Hybrid recommendation:} LIGER~\cite{Yang2025LIGER} and MemGen-GR~\cite{Ding2026MemGen-GR}. 
To evaluate the effectiveness of HCPD, we further compare it with alternative training objectives for GR: TCA~\cite{Lin2026TCA}, TCA-Aux~\cite{Lin2026TCA}, APAO-pointwise~\cite{Yu2026APAO}, GFlowGR-DB~\cite{Wang2026GFlowGR}, and GFlowGR-TB~\cite{Wang2026GFlowGR}. Descriptions of all baselines, including the additional methods evaluated on the Amazon Reviews 2023 benchmarks, are provided in Appendix~\ref{app:baselines}.

\begin{table*}[!t]
\centering
\caption{Comparison of training methods applied to TIGER on the Amazon Reviews 2014 benchmarks. Methods using a collaborative model adopt the same DuoRec. Bold and underlined values indicate the best and second-best results, respectively.}
\label{tab:training_objectives}
\resizebox{\linewidth}{!}{%
\begin{tabular}{l*{12}{c}}
\toprule
\multirow{2}{*}{\textbf{Method}}
& \multicolumn{4}{c}{\textbf{Beauty}}
& \multicolumn{4}{c}{\textbf{Sports}}
& \multicolumn{4}{c}{\textbf{Toys}} \\
\cmidrule(lr){2-5}
\cmidrule(lr){6-9}
\cmidrule(lr){10-13}
& R@5 & R@10 & N@5 & N@10
& R@5 & R@10 & N@5 & N@10
& R@5 & R@10 & N@5 & N@10 \\
\midrule

TIGER
& \mstd{0.0435}{0.09}
& \mstd{0.0673}{0.16}
& \mstd{0.0291}{0.09}
& \mstd{0.0367}{0.11}
& \mstd{0.0283}{0.17}
& \mstd{0.0432}{0.26}
& \mstd{0.0183}{0.12}
& \mstd{0.0231}{0.15}
& \mstd{0.0422}{0.45}
& \mstd{0.0659}{0.57}
& \mstd{0.0279}{0.32}
& \mstd{0.0355}{0.35} \\

$+$ TCA
& \mstd{0.0431}{0.13}
& \mstd{0.0633}{0.24}
& \mstd{0.0295}{0.08}
& \mstd{0.0360}{0.12}
& \mstd{0.0256}{0.17}
& \mstd{0.0386}{0.19}
& \mstd{0.0170}{0.10}
& \mstd{0.0212}{0.10}
& \mstd{0.0472}{0.22}
& \mstd{0.0703}{0.23}
& \mstd{\underline{0.0317}}{0.07}
& \mstd{0.0391}{0.07} \\

$+$ TCA-Aux
& \mstd{\underline{0.0480}}{0.29}
& \mstd{\underline{0.0721}}{0.35}
& \mstd{\underline{0.0325}}{0.20}
& \mstd{\underline{0.0403}}{0.21}
& \mstd{\underline{0.0289}}{0.07}
& \mstd{\underline{0.0455}}{0.05}
& \mstd{\underline{0.0188}}{0.08}
& \mstd{\underline{0.0241}}{0.05}
& \mstd{\underline{0.0475}}{0.19}
& \mstd{\underline{0.0738}}{0.28}
& \mstd{0.0315}{0.19}
& \mstd{\underline{0.0400}}{0.22} \\

$+$ APAO-pointwise
& \mstd{0.0436}{0.18}
& \mstd{0.0661}{0.21}
& \mstd{0.0294}{0.15}
& \mstd{0.0366}{0.16}
& \mstd{0.0271}{0.10}
& \mstd{0.0430}{0.08}
& \mstd{0.0177}{0.09}
& \mstd{0.0228}{0.08}
& \mstd{0.0427}{0.29}
& \mstd{0.0674}{0.49}
& \mstd{0.0284}{0.21}
& \mstd{0.0363}{0.27} \\

$+$ GFlowGR-DB
& \mstd{0.0443}{0.22}
& \mstd{0.0677}{0.32}
& \mstd{0.0297}{0.11}
& \mstd{0.0373}{0.14}
& \mstd{0.0279}{0.08}
& \mstd{0.0435}{0.09}
& \mstd{0.0183}{0.05}
& \mstd{0.0234}{0.06}
& \mstd{0.0452}{0.20}
& \mstd{0.0697}{0.23}
& \mstd{0.0296}{0.12}
& \mstd{0.0375}{0.13} \\

$+$ GFlowGR-TB
& \mstd{0.0459}{0.10}
& \mstd{0.0699}{0.17}
& \mstd{0.0306}{0.12}
& \mstd{0.0383}{0.13}
& \mstd{0.0276}{0.08}
& \mstd{0.0436}{0.16}
& \mstd{0.0178}{0.05}
& \mstd{0.0230}{0.07}
& \mstd{0.0416}{0.15}
& \mstd{0.0660}{0.11}
& \mstd{0.0271}{0.14}
& \mstd{0.0350}{0.12} \\

\midrule
\rowcolor{regerblue}
\textbf{$+$ HCPD}
& \mstd{\textbf{0.0535}}{0.16}
& \mstd{\textbf{0.0794}}{0.20}
& \mstd{\textbf{0.0361}}{0.14}
& \mstd{\textbf{0.0445}}{0.15}
& \mstd{\textbf{0.0323}}{0.05}
& \mstd{\textbf{0.0501}}{0.13}
& \mstd{\textbf{0.0212}}{0.03}
& \mstd{\textbf{0.0269}}{0.06}
& \mstd{\textbf{0.0544}}{0.24}
& \mstd{\textbf{0.0823}}{0.40}
& \mstd{\textbf{0.0350}}{0.12}
& \mstd{\textbf{0.0440}}{0.19} \\

\bottomrule
\end{tabular}%
}
\end{table*}

\vspace{-0.5em}
\subsection{Main Results}

Table~\ref{tab:main_results} presents the results on the Amazon Reviews 2014 benchmarks. We make the following key observations:

\noindent\textbf{Overall effectiveness.}
On the 2014 benchmarks, DuoRec outperforms the generative baselines in most settings, consistent with previous observations that strong discriminative recommenders can outperform generative retrieval models~\cite{Yang2025LIGER}. 
% Nevertheless, REGER outperforms all generative and hybrid baselines across the three datasets and all metrics, while also achieving better overall performance than DuoRec. 
Nevertheless, REGER outperforms all generative and hybrid baselines across the three datasets and all metrics, while also outperforming DuoRec in most settings.
Even on Toys, where its Recall remains lower than DuoRec's, REGER achieves comparable or slightly better NDCG. These results support the effectiveness of explicitly incorporating inter-item semantic and collaborative relations into GR.

\noindent\textbf{Semantic and collaborative priors.}
REGER consistently outperforms methods that incorporate semantic and collaborative information into SIDs or jointly optimize tokenizer and recommender such as UNGER and ETEGRec, across all three datasets. These results further highlight the importance of explicitly learning inter-item relations within the GR model. The value of these relations depends not only on the information preserved in SIDs but also on how the generator uses these relations during item encoding and generation.

\noindent\textbf{Learning item relations within the generator.}
ActionPiece and TIGER-SST consistently outperform TIGER, highlighting the value of modeling inter-item behavioral transitions during token-level encoding. REGER achieves further gains while using fixed-length SIDs. These results show that residual-quantized SIDs can support stronger recommendation performance through explicit inter-item relation learning within the generator.

\noindent\textbf{Comparison with hybrid recommendation.}
REGER outperforms LIGER and MemGen-GR across all three datasets and metrics while relying solely on SID generation at inference. These results highlight the value of improving the GR model's own training: semantic pretraining and recommendation training explicitly guided by inter-item semantic and collaborative relations provide an effective way to strengthen its recommendation capability.

\begin{table}[!ht]
\centering
\caption{Ablation Study. \textbf{w/o GSR} skips semantic pretraining. \textbf{w/o ITSA} replaces ITSA with bidirectional self-attention using the same number of heads. \textbf{w/o HCPD} removes preference distillation during recommendation training.}
\label{tab:ablation}
\resizebox{\linewidth}{!}{%
\begin{tabular}{l*{6}{c}}
\toprule
\multirow{2}{*}{\textbf{Variants}}
& \multicolumn{2}{c}{\textbf{Beauty}}
& \multicolumn{2}{c}{\textbf{Sports}}
& \multicolumn{2}{c}{\textbf{Toys}} \\
\cmidrule(lr){2-3}
\cmidrule(lr){4-5}
\cmidrule(lr){6-7}
& R@5 & N@5
& R@5 & N@5
& R@5 & N@5 \\
\midrule

w/o GSR
& \mstd{0.0537}{0.17}
& \mstd{0.0359}{0.09}
& \mstd{\underline{0.0346}}{0.09}
& \mstd{\underline{0.0228}}{0.07}
& \mstd{0.0557}{0.07}
& \mstd{0.0357}{0.05} \\

w/o ITSA
& \mstd{\underline{0.0566}}{0.07}
& \mstd{\underline{0.0375}}{0.04}
& \mstd{0.0340}{0.07}
& \mstd{0.0223}{0.04}
& \mstd{\underline{0.0579}}{0.13}
& \mstd{\underline{0.0369}}{0.08} \\

w/o HCPD
& \mstd{0.0514}{0.10}
& \mstd{0.0357}{0.07}
& \mstd{0.0318}{0.08}
& \mstd{0.0214}{0.03}
& \mstd{0.0502}{0.19}
& \mstd{0.0336}{0.09} \\

\midrule
\rowcolor{regerblue}
\textbf{REGER}
& \mstd{\textbf{0.0594}}{0.05}
& \mstd{\textbf{0.0395}}{0.08}
& \mstd{\textbf{0.0357}}{0.06}
& \mstd{\textbf{0.0235}}{0.07}
& \mstd{\textbf{0.0612}}{0.13}
& \mstd{\textbf{0.0392}}{0.08} \\

\bottomrule
\end{tabular}%
}
\end{table}

\subsection{Ablation Study}
\label{sec:ablation}

Table~\ref{tab:ablation} presents the results of our ablation study evaluating the contributions of GSR, ITSA, and HCPD. Removing any component consistently degrades performance across all three datasets.

\noindent\textbf{Effect of GSR.}
Removing GSR consistently reduces performance across all three datasets, with larger drops on Beauty and Toys. These results demonstrate the value of learning inter-item semantic relations before recommendation training, providing an effective initialization for subsequent user preference learning.

\noindent\textbf{Effect of ITSA.}
Replacing ITSA with standard bidirectional self-attention consistently reduces performance across all three datasets. These results highlight the value of incorporating item boundaries and directed transition structure into token-level attention for modeling inter-item behavioral relations.

\noindent\textbf{Effect of HCPD.}
Removing HCPD causes the largest performance drop across all three datasets, with Recall@5 decreasing by 10.9\%--18.0\%. This highlights the importance of supplementing CE's positive-item supervision with collaborative preferences and supports using item-level collaborative relations to guide SID generation.

\subsection{Comparison of Training Objectives}
\label{sec:objective_comparison}

We apply HCPD directly to TIGER and compare it with TCA, APAO-pointwise, GFlowGR-DB, and GFlowGR-TB in terms of recommendation performance and training efficiency.

\noindent\textbf{Effectiveness of HCPD.}
% Table~\ref{tab:training_objectives} shows that HCPD consistently improves TIGER and achieves the best results across the three datasets and all metrics. Its advantage over TCA and both GFlowGR variants, which use the same frozen DuoRec, highlights the importance of how collaborative preferences supervise the generator. TCA optimizes a soft-label likelihood whose gradient weights depend on both the soft labels and the generator's current probabilities, while GFlowGR imposes flow-balance constraints on transitions or complete sampled item trajectories. HCPD instead directly uses the teacher's conditional probabilities as targets for all legal branches at each ground-truth prefix, including non-target branches without requiring their complete item trajectories to be sampled. Notably, differentiating the HCPD objective in Eq.~\eqref{eq:hcpd_objective} with respect to each legal token's logit yields a gradient proportional to $q_{\theta}(v\mid\mathbf{a}_{\ell})-q_T(v\mid\mathbf{a}_{\ell})$. This gives non-target branches that the generator underestimates relative to the teacher a direct corrective signal. Such supervision explicitly captures the relative preference relations among the candidate item groups represented by competing branches, complementing CE's positive-item supervision. These results show that GR can internalize inter-item preferences through training.
Table~\ref{tab:training_objectives} shows that HCPD consistently improves TIGER and achieves the best results across all three datasets and metrics. Its advantage over TCA, TCA-Aux, and both GFlowGR variants, all using the same frozen DuoRec, highlights the importance of how collaborative preferences supervise the generator. Whereas TCA-Aux augments CE with a KL loss using full-vocabulary student probabilities, HCPD renormalizes the student distribution over legal branches at each ground-truth prefix and directly matches the teacher's conditional preferences, restricting distillation to coarse-to-fine preference relations among catalog items. We analyze their objectives and gradients in Appendix~\ref{app:tca_aux_comparison}. Differentiating Eq.~\eqref{eq:hcpd_objective} with respect to each legal token's logit yields a gradient proportional to $q_{\theta}(v\mid\mathbf{a}_{\ell})-q_T(v\mid\mathbf{a}_{\ell})$, providing a direct corrective signal to non-target branches underestimated relative to the teacher. This supervision captures relative preferences among the candidate item groups represented by competing branches, complementing CE's positive-item supervision. These results show that GR can internalize inter-item preferences through training.

\begin{table}[t]
\centering
\caption{Training and inference efficiency on Beauty with both training and test batch sizes set to 256. Best epoch is the epoch of peak validation performance, averaged across three seeds. Test denotes the total inference time on the test set.}
\label{tab:training_efficiency}
\resizebox{\linewidth}{!}{%
\begin{tabular}{lcccc}
\toprule
\multicolumn{1}{c}{\multirow{2}{*}{\textbf{Method}}}
& \textbf{Train memory}
& \textbf{Time/epoch}
& \textbf{Best}
& \textbf{Test} \\
& \textbf{(MB)}
& \textbf{(s)}
& \textbf{epoch}
& \textbf{(s)} \\
\midrule

TIGER
& 3072 & 19 & 104 & 16 \\

$+$ TCA
& 3104 & 22 & 136 & 106 \\

$+$ TCA-Aux
& 3104 & 22 & 117 & 106 \\

$+$ APAO-pointwise
& 3074 & 21 & 106 & 16 \\

$+$ GFlowGR-DB
& 3664 & 228 & 116 & 16 \\

$+$ GFlowGR-TB
& 3664 & 239 & 113 & 16 \\

\midrule
\rowcolor{regerblue}
\textbf{$+$ HCPD}
& 3094 & 23 & 135 & 16 \\

\bottomrule
\end{tabular}%
}
\end{table}

\noindent\textbf{Efficiency of HCPD.}
Table~\ref{tab:training_efficiency} reports training memory usage, time per epoch, average best epoch, and test time.
Experiments are conducted on Beauty using a single NVIDIA L40 GPU in FP32.
All methods use the same TIGER backbone and evaluation metrics, with KV caching disabled.
TCA and TCA-Aux use constrained decoding following TCA's inference design~\cite{Lin2026TCA}, while the other methods use unconstrained beam search.
% Compared with TIGER, HCPD adds only 22 MB of training memory and 4 s per epoch, while retaining the same inference procedure and test time of 16 s.
Together with the consistent performance gains in Table~\ref{tab:training_objectives}, these results demonstrate that HCPD provides effective collaborative preference supervision with modest training overhead and unchanged inference speed.

\subsection{In-depth Analysis}
\label{sec:indepth_analysis}

\begin{figure}[t]
    \centering
    \includegraphics[width=\linewidth]{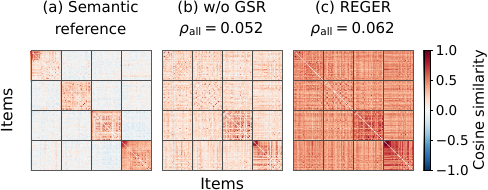}
    \caption{Inter-item cosine similarities on Beauty. Heatmaps show the same 160 items in the same order across panels. $\rho_{\mathrm{all}}$ denotes Spearman correlation with the semantic reference over all unique item pairs in the catalog, excluding self-pairs.}
    \label{fig:gsr_relationships}
\end{figure}

\noindent\textbf{Inter-item semantic relations.}
% As shown in Fig.~\ref{fig:gsr_relationships}, REGER achieves a higher Spearman correlation with the semantic reference than w/o GSR ($0.062$ vs.\ $0.052$), indicating closer agreement in the relative similarities between item pairs. This suggests that semantic pretraining helps incorporate inter-item semantic relations into the representations learned for recommendation.
Fig.~\ref{fig:gsr_relationships} shows that REGER exhibits higher overall cosine similarities, while its relative item-pair similarities agree more closely with the semantic reference. With the SID mapping unchanged, this improved agreement suggests that GSR helps the generator learn representations of SID token sequences that better capture content-derived item relations. This effect highlights the value of incorporating semantic relations into the generator's learned representations, beyond the structure encoded in discrete SIDs.

\begin{figure}[t]
    \centering
    \includegraphics[width=\linewidth]{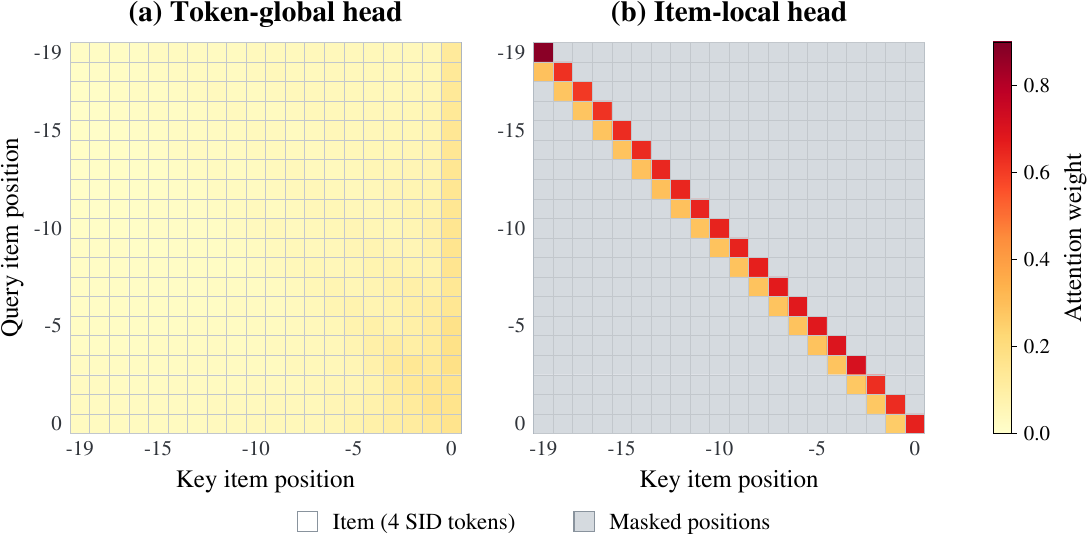}
    \caption{Attention patterns of ITSA on Beauty, aggregated at the item level. Position $0$ denotes the most recent item.}
    \label{fig:itsa_attention}
\end{figure}

\noindent\textbf{Inter-item attention patterns.}
Fig.~\ref{fig:itsa_attention} visualizes a token-global head and an item-local head from the final encoder layer on Beauty. We aggregate attention over each item's SID tokens and average the weights across test histories at valid position pairs. The token-global head distributes attention across the history, with relatively higher weights on recent items, providing context beyond adjacent interactions. Within the local window, the item-local head assigns greater weight to the current item while consistently attending to its immediate predecessor. The visible subdiagonal band reflects information flow from the preceding item to the current item. Together, these patterns illustrate how ITSA combines broad historical dependencies with explicit connections between adjacent items. This perspective complements SST's analysis of intra-item attention overload~\cite{Miao2026SST} by emphasizing how encoder attention can be organized to support inter-item relations.

\begin{figure}[t]
    \centering
    \includegraphics[width=\linewidth]{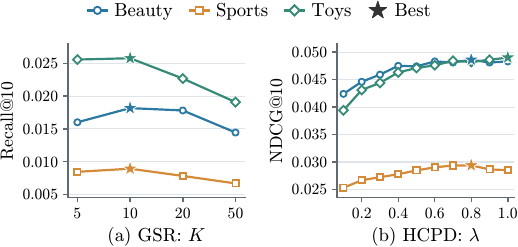}
    \caption{Effects of (a) the GSR neighborhood size $K$ on pre-training performance and (b) the HCPD weight $\lambda$ on GR.}
    \label{fig:gsr_hcpd}
\end{figure}
\begin{figure}[t]
    \centering
    \includegraphics[width=\linewidth]{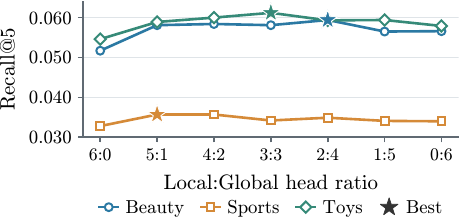}
    \caption{Effect of the item-local to token-global head ratio (Local:Global) in ITSA on GR performance .}
    \label{fig:itsa_heads}
\end{figure}

\subsection{Hyperparameter Analysis}
\label{sec:hyperparameter_analysis}

\noindent\textbf{Effect of neighborhood size $K$.}
We vary $K\in\{5,10,20,50\}$ using four item-local heads and two token-global heads. Fig.~\ref{fig:gsr_hcpd}(a) reports the generative recommendation performance of Stage~1 checkpoints. Recall@10 peaks at $K=10$ on all three datasets and decreases as $K$ increases further. This suggests that including more distant semantic neighbors may reduce the benefit of pretraining for recommendation. We therefore adopt $K=10$ as the default setting in our experiments.

\noindent\textbf{Effect of HCPD weight $\lambda$.}
As shown in Fig.~\ref{fig:gsr_hcpd}(b), we vary $\lambda$ from $0.1$ to $1.0$. NDCG@10 peaks at $\lambda=0.8$ for Beauty and Sports and $\lambda=1.0$ for Toys. These results demonstrate the benefit of strengthening collaborative preference supervision, while the preferred balance with CE varies across datasets.

% \begin{figure}[t]
%     \centering
%     \includegraphics[width=\linewidth]{figs/Hyperparameter/gsr_k_hcpd_lambda_ab_new.pdf}
%     \caption{Effects of (a) the GSR neighborhood size $K$ on pre-training performance and (b) the HCPD weight $\lambda$ on GR.}
%     \label{fig:gsr_hcpd}
% \end{figure}

\noindent\textbf{Effect of ITSA head ratio.}
The T5 backbone used in our GR model has six attention heads per encoder layer by default. With $K=10$ and the selected $\lambda$ values, we vary the ratio of item-local to token-global heads (Local:Global). Based on Fig.~\ref{fig:itsa_heads}, we select ratios of $2:4$, $5:1$, and $3:3$ for Beauty, Sports, and Toys, respectively. These configurations outperform both all-local and all-global attention, demonstrating the complementary value of local transition modeling and global historical context.

% \begin{figure}[t]
%     \centering
%     \includegraphics[width=\linewidth]{figs/Hyperparameter/itsa_heads_recall5_best_new.pdf}
%     \caption{Effect of the item-local to token-global head ratio (Local:Global) in ITSA on GR performance .}
%     \label{fig:itsa_heads}
% \end{figure}

\section{Conclusion}
\label{sec:conclusion}

% We presented REGER, a generative recommendation framework that explicitly models inter-item semantic and collaborative relations through semantic pretraining, item transition modeling, and collaborative preference distillation. We established an exact factorization of item preference distributions and a weighted decomposition of item-level KL divergence over SID prefix trees, providing a theoretical basis for HCPD. Experiments on six real-world benchmarks demonstrate consistent performance gains, while HCPD maintains practical training efficiency. These findings highlight the value of internalizing item knowledge within the GR model itself.
We presented REGER, a generative recommendation framework that explicitly models inter-item semantic and collaborative relations through semantic pretraining, 
  item transition modeling, and collaborative preference distillation. We established an exact factorization of item preference distributions and a weighted 
  decomposition of item-level KL divergence over SID prefix trees, which motivates the design of HCPD. Experiments on six real-world benchmarks 
  demonstrate consistent performance gains, while HCPD maintains practical training efficiency. These findings highlight the value of internalizing item knowledge 
  within the GR model itself.

\bibliographystyle{ACM-Reference-Format}
\bibliography{ref}

%%
%% If your work has an appendix, this is the place to put it.
\appendix
\appendix

\section{Experimental Settings}
\label{app:experimental_settings}

\subsection{Datasets}
\label{app:datasets}

We follow the released data preprocessing pipelines of ActionPiece\footnote{\label{fn:actionpiece}\url{https://github.com/google-deepmind/action_piece}}~\cite{Hou2025Actionpiece} and Pctx\footnote{\label{fn:pctx}\url{https://github.com/YoungZ365/Pctx}}~\cite{Zhong2025Pctx} for the 2014 and 2023 benchmarks, respectively.
For each user, the most recent interaction is used for testing, the second most recent for validation, and all earlier interactions for training.
Table~\ref{tab:dataset_statistics} summarizes the datasets after preprocessing.

% \begin{table}[H]
% \centering
% \caption{Dataset statistics after preprocessing.}
% \label{tab:dataset_statistics}
% \resizebox{\linewidth}{!}{%
% \begin{tabular}{lccccc}
% \toprule
% \textbf{Dataset}
% & \textbf{\#Users}
% & \textbf{\#Items}
% & \textbf{\#Interactions}
% & \textbf{Sparsity}
% & \textbf{Avg. Len.} \\
% \midrule
% % \multicolumn{6}{c}{\textbf{\textit{Amazon Reviews 2014}}} \\
% Beauty
% & 22,363 & 12,101 & 198,502 & 99.927\% & 8.88 \\
% Sports
% & 35,598 & 18,357 & 296,337 & 99.955\% & 8.32 \\
% Toys
% & 19,412 & 11,924 & 167,597 & 99.928\% & 8.63 \\
% \midrule
% % \multicolumn{6}{c}{\textbf{\textit{Amazon Reviews 2023}}} \\
% Instruments
% & 57,439 & 24,587 & 511,836 & 99.964\% & 8.91 \\
% Scientific
% & 50,985 & 25,848 & 412,947 & 99.969\% & 8.10 \\
% Games
% & 94,762 & 25,612 & 814,586 & 99.966\% & 8.60 \\
% \bottomrule
% \end{tabular}%
% }
% \end{table}
\begin{table}[H]
\centering
\caption{Dataset statistics after preprocessing.}
\label{tab:dataset_statistics}
\resizebox{\linewidth}{!}{%
\begin{tabular}{lccccc}
\toprule
\textbf{Dataset}
& \textbf{\#Users}
& \textbf{\#Items}
& \textbf{\#Interactions}
& \textbf{Sparsity}
& \textbf{Avg. Len.} \\
\midrule
Beauty (2014)
& 22,363 & 12,101 & 198,502 & 99.927\% & 8.88 \\
Sports (2014)
& 35,598 & 18,357 & 296,337 & 99.955\% & 8.32 \\
Toys (2014)
& 19,412 & 11,924 & 167,597 & 99.928\% & 8.63 \\
Instruments (2023)
& 57,439 & 24,587 & 511,836 & 99.964\% & 8.91 \\
Scientific (2023)
& 50,985 & 25,848 & 412,947 & 99.969\% & 8.10 \\
Games (2023)
& 94,762 & 25,612 & 814,586 & 99.966\% & 8.60 \\
\bottomrule
\end{tabular}%
}
\end{table}

\begin{table*}[t]
\centering
\caption{
Performance comparison on the Amazon Reviews 2023 benchmarks.
Published baseline results are taken from Latte~\cite{Hou2026Latte},
OMEGA~\cite{Liu2026OMEGA}, and the original papers of
DECOR~\cite{Liu2026DECOR} and TopoTok~\cite{Liu2026ToPoTok}.
The published baseline results use the same 5-core filtering and maximum history length of 20 as our experiments.
}
\label{tab:amazon2023_results}
\resizebox{\linewidth}{!}{%
\begin{tabular}{l*{12}{c}}
\toprule
\multirow{2}{*}{\textbf{Method}}
& \multicolumn{4}{c}{\textbf{Instruments}}
& \multicolumn{4}{c}{\textbf{Scientific}}
& \multicolumn{4}{c}{\textbf{Games}} \\
\cmidrule(lr){2-5}
\cmidrule(lr){6-9}
\cmidrule(lr){10-13}
& R@5 & R@10 & N@5 & N@10
& R@5 & R@10 & N@5 & N@10
& R@5 & R@10 & N@5 & N@10 \\
\midrule

\multicolumn{13}{c}{\textit{Discriminative Recommendation}} \\

SASRec
& 0.0333 & 0.0523 & 0.0213 & 0.0274
& 0.0259 & 0.0412 & 0.0150 & 0.0199
& 0.0535 & 0.0847 & 0.0331 & 0.0438 \\

BERT4Rec
& 0.0307 & 0.0485 & 0.0195 & 0.0252
& 0.0186 & 0.0296 & 0.0119 & 0.0155
& 0.0460 & 0.0735 & 0.0298 & 0.0386 \\

FDSA
& 0.0347 & 0.0545 & 0.0230 & 0.0293
& 0.0262 & 0.0421 & 0.0169 & 0.0213
& 0.0544 & 0.0852 & 0.0361 & 0.0448 \\

\rowcolor{teacherblue}
DuoRec
& \mstd{0.0377}{0.02}
& \mstd{0.0589}{0.04}
& \mstd{0.0239}{0.01}
& \mstd{0.0307}{0.01}
& \mstd{0.0275}{0.04}
& \mstd{0.0427}{0.05}
& \mstd{0.0171}{0.02}
& \mstd{0.0220}{0.01}
& \mstd{0.0582}{0.05}
& \mstd{0.0924}{0.05}
& \mstd{0.0361}{0.04}
& \mstd{0.0471}{0.03} \\

\midrule
\multicolumn{13}{c}{\textit{Generative Recommendation}} \\

TIGER
& 0.0370 & 0.0564 & 0.0244 & 0.0306
& 0.0264 & 0.0422 & 0.0175 & 0.0226
& 0.0559 & 0.0868 & 0.0366 & 0.0467 \\

LETTER
& 0.0372 & 0.0580 & 0.0246 & 0.0313
& 0.0279 & 0.0435 & 0.0182 & 0.0232
& 0.0563 & 0.0877 & 0.0372 & 0.0473 \\

ETEGRec
& 0.0395 & 0.0625 & 0.0257 & 0.0328
& 0.0288 & 0.0448 & 0.0189 & 0.0238
& 0.0589 & 0.0912 & 0.0385 & 0.0498 \\

ActionPiece
& 0.0383 & 0.0615 & 0.0243 & 0.0318
& 0.0284 & 0.0452 & 0.0182 & 0.0236
& 0.0591 & 0.0927 & 0.0382 & 0.0490 \\

DECOR
& \underline{0.0409}
& 0.0617
& \underline{0.0272}
& \underline{0.0339}
& 0.0309
& 0.0469
& \underline{0.0206}
& 0.0257
& 0.0610 & 0.0944 & 0.0400 & 0.0507 \\

Latte
& 0.0401 & 0.0618 & 0.0261 & 0.0331
& 0.0304 & 0.0470 & 0.0196 & 0.0249
& 0.0618
& 0.0958
& \textbf{0.0406}
& \textbf{0.0515} \\

TIGER-TopoTok
& 0.0402 & 0.0613 & 0.0263 & 0.0331
& 0.0302 & 0.0465 & 0.0196 & 0.0248
& 0.0608
& 0.0939
& \underline{0.0403}
& \underline{0.0509} \\

\midrule
\multicolumn{13}{c}{\textit{Hybrid Recommendation}} \\

LIGER
& 0.0408
& \underline{0.0630}
& 0.0264
& 0.0334
& \underline{0.0327}
& \textbf{0.0528}
& 0.0203
& \underline{0.0265}
& \underline{0.0619}
& \textbf{0.0969}
& 0.0391
& 0.0503 \\

\midrule
\multicolumn{13}{c}{\textit{Ours (GR)}} \\

\rowcolor{regerblue}
\textbf{REGER}
& \mstd{\textbf{0.0424}}{0.09}
& \mstd{\textbf{0.0641}}{0.10}
& \mstd{\textbf{0.0279}}{0.07}
& \mstd{\textbf{0.0349}}{0.07}
& \mstd{\textbf{0.0338}}{0.05}
& \mstd{\underline{0.0513}}{0.10}
& \mstd{\textbf{0.0221}}{0.03}
& \mstd{\textbf{0.0277}}{0.04}
& \mstd{\textbf{0.0623}}{0.01}
& \mstd{\underline{0.0966}}{0.06}
& \mstd{\textbf{0.0406}}{0.01}
& \mstd{\textbf{0.0515}}{0.03} \\

w/o HCPD
& \mstd{0.0370}{0.05}
& \mstd{0.0561}{0.06}
& \mstd{0.0245}{0.04}
& \mstd{0.0307}{0.03}
& \mstd{0.0290}{0.04}
& \mstd{0.0448}{0.07}
& \mstd{0.0191}{0.02}
& \mstd{0.0242}{0.01}
& \mstd{0.0594}{0.09}
& \mstd{0.0910}{0.09}
& \mstd{0.0395}{0.06}
& \mstd{0.0496}{0.04} \\

\bottomrule
\end{tabular}%
}
\end{table*}

\subsection{Baselines}
\label{app:baselines}

\noindent\textbf{Main baselines.}
We organize the main baselines into three groups:
(1) \emph{Discriminative recommendation:}
SASRec~\cite{Kang2018SASRec} uses unidirectional self-attention to model interaction sequences for next-item prediction.
FDSA~\cite{Zhang2019FDSA} separately models item-level and feature-level transitions through self-attention.
BERT4Rec~\cite{Sun2019BERT4Rec} employs bidirectional self-attention with a masked item prediction objective.
S$^3$-Rec~\cite{Zhou2020S3Rec} uses self-supervised pretraining to capture correlations among attributes, items, and sequences.
DuoRec~\cite{Qiu2022DuoRec} combines dropout-based augmentation and supervised contrastive learning to alleviate representation degeneration.
(2) \emph{Generative recommendation:}
TIGER~\cite{Rajput2024TIGER} quantizes item content embeddings into SIDs using RQ-VAE and autoregressively generates the next item's SID.
LETTER~\cite{Wang2024LETTER} combines semantic reconstruction, collaborative alignment, and diversity regularization to learn item codes.
EAGER~\cite{Wang2024EAGER} uses separate behavioral and semantic generation streams with cross-stream information transfer.
UNGER~\cite{Xiao2026UNGER} integrates semantic and collaborative information into unified item codes through alignment and knowledge distillation.
ReSID~\cite{Liang2026ReSID} constructs semantic IDs from structured item features through field-aware masked autoencoding and globally aligned orthogonal quantization.
ETEGRec~\cite{Liu2025ETEGRec} alternately optimizes its tokenizer and recommender through sequence--item and preference--semantic alignment.
DIGER~\cite{Fu2026DIGER} jointly learns semantic IDs and recommendation through differentiable quantization with decaying Gumbel noise.
ActionPiece~\cite{Hou2025Actionpiece} constructs contextual tokens by merging frequently co-occurring features within items and across adjacent interactions.
TIGER-SST~\cite{Miao2026SST} applies semantic subword tokenization to TIGER's history inputs while preserving fixed-length target SIDs.
(3) \emph{Hybrid recommendation:}
LIGER~\cite{Yang2025LIGER} combines SID generation with dense retrieval to support candidate reranking and cold-start recommendation.
MemGen-GR~\cite{Ding2026MemGen-GR} adaptively combines generative and item-ID-based recommenders using a memorization-aware indicator to determine ensemble weights for each instance.

\noindent\textbf{Training objectives.}
TCA and TCA-Aux~\cite{Lin2026TCA} maximize weighted token likelihood and augment CE with a full-vocabulary auxiliary KL loss, respectively, using token-level soft labels derived from collaborative item scores.
APAO-pointwise~\cite{Yu2026APAO} optimizes ground-truth prefix likelihoods and adaptively emphasizes prefixes with higher losses.
GFlowGR-DB and GFlowGR-TB~\cite{Wang2026GFlowGR} impose flow-balance constraints on individual decoding transitions and complete generation trajectories, respectively, using collaborative signals to define terminal rewards.

\noindent\textbf{Additional baselines on Amazon Reviews 2023.}
DECOR~\cite{Liu2026DECOR} combines pretrained codebook embeddings with learned collaborative embeddings and adapts token representations to the interaction context.
Latte~\cite{Hou2026Latte} introduces a latent token before each SID to create multiple decoding trees and alleviate probability coupling induced by a single tree.
TopoTok~\cite{Liu2026ToPoTok} preserves semantic relationships during quantization through inter-group, intra-group, and inter-item topology distillation.

\subsection{Implementation Details}
\label{app:implementation_details}

\noindent\textbf{Item tokenization.}
Following MTGRec\footnote{\label{fn:mtgrec}\url{https://github.com/RUCAIBox/MTGRec}}~\cite{Zheng2025MTGRec} and Pctx\textsuperscript{\ref{fn:pctx}}~\cite{Zhong2025Pctx}, we encode item text using Sentence-T5~\cite{Ni2022Sentence-T5} and reduce the embeddings from 768 to 128 dimensions using PCA with whitening.
Item text comprises titles, prices, brands, and categories for Amazon Reviews 2014, and titles, features, categories, and descriptions for Amazon Reviews 2023.
We follow the default RQ-VAE configuration of MTGRec\textsuperscript{\ref{fn:mtgrec}} and train for up to 20,000 epochs.
The tokenizer is trained once per dataset with seed 2024, and its SID mapping is shared across recommender seeds.

\noindent\textbf{Backbone and baselines.}
We adopt the T5 encoder--decoder configuration used in ActionPiece\textsuperscript{\ref{fn:actionpiece}}~\cite{Hou2025Actionpiece}.
By default, reproduced GR baselines are trained for up to 200 epochs with a batch size of 256 and weight decay of 0.05.
We train each reproduced GR baseline on each dataset with learning rates of $3 \times 10^{-3}$ and $3 \times 10^{-4}$, and report the test results of the configuration achieving the highest validation NDCG@10.
We use cosine learning rate scheduling with 10,000 warm-up steps and an early stopping patience of 25 epochs, selecting checkpoints by validation NDCG@10.
The beam size is 20 for validation and testing, except for EAGER, ReSID, and ActionPiece\textsuperscript{\ref{fn:actionpiece}}, for which we use the beam sizes specified in their official implementations.
We reproduce discriminative baselines using RecBole~\cite{Zhao2021Recbole} or their official implementations.
For the remaining baselines, we follow their official implementations and reported settings.

\noindent\textbf{REGER.}
On both benchmark versions, GSR pretraining uses a learning rate of $3 \times 10^{-3}$ and runs for up to 100 epochs with an early stopping patience of 10 epochs.
Recommendation training jointly optimizes CE and HCPD for up to 200 epochs with a learning rate of $3 \times 10^{-4}$ and an early stopping patience of 15 epochs.
Both stages use AdamW with cosine learning rate scheduling and the batch size, weight decay, and checkpoint selection criterion specified above.
For all datasets, we set $K=10$ and freeze the teacher during recommendation training.
The ITSA head ratios and HCPD weights for the 2014 datasets are reported in Section~\ref{sec:hyperparameter_analysis}.
For the 2023 datasets, the item-local:token-global head ratio and $\lambda$ are $5:1$ and $0.9$ for Instruments and Scientific, and $4:2$ and $1.0$ for Games.

\noindent\textbf{Remark.}
Following TCA's inference settings~\cite{Lin2026TCA}, we use constrained beam search for TCA and TCA-Aux to preserve their recommendation performance, while the other GR methods and REGER use unconstrained beam search .

\begin{table}[t]
\centering
\caption{Effect of the distillation teacher on REGER on the 2014 benchmarks. All results use a beam size of 50.}
\label{tab:teacher_comparison}
\resizebox{\linewidth}{!}{%
\begin{tabular}{l*{6}{c}}
\toprule
\multirow{2}{*}{\textbf{Method}}
& \multicolumn{2}{c}{\textbf{Beauty}}
& \multicolumn{2}{c}{\textbf{Sports}}
& \multicolumn{2}{c}{\textbf{Toys}} \\
\cmidrule(lr){2-3}
\cmidrule(lr){4-5}
\cmidrule(lr){6-7}
& N@5 & R@5
& N@5 & R@5
& N@5 & R@5 \\
\midrule

REGER w/o HCPD
& \mstd{\underline{0.0357}}{0.07}
& \mstd{\underline{0.0514}}{0.10}
& \mstd{\underline{0.0214}}{0.03}
& \mstd{\underline{0.0319}}{0.08}
& \mstd{\underline{0.0336}}{0.09}
& \mstd{\underline{0.0502}}{0.19} \\

REGER (SASRec)
& \mstd{0.0289}{0.14}
& \mstd{0.0465}{0.20}
& \mstd{0.0167}{0.03}
& \mstd{0.0285}{0.06}
& \mstd{0.0288}{0.12}
& \mstd{0.0498}{0.14} \\

\rowcolor{regerblue}
\textbf{REGER (DuoRec)}
& \mstd{\textbf{0.0395}}{0.08}
& \mstd{\textbf{0.0594}}{0.05}
& \mstd{\textbf{0.0235}}{0.07}
& \mstd{\textbf{0.0357}}{0.07}
& \mstd{\textbf{0.0392}}{0.08}
& \mstd{\textbf{0.0612}}{0.13} \\

\midrule
\multirow{2}{*}{\textbf{Method}}
& \multicolumn{2}{c}{\textbf{Beauty}}
& \multicolumn{2}{c}{\textbf{Sports}}
& \multicolumn{2}{c}{\textbf{Toys}} \\
\cmidrule(lr){2-3}
\cmidrule(lr){4-5}
\cmidrule(lr){6-7}
& N@50 & R@50
& N@50 & R@50
& N@50 & R@50 \\
\midrule

REGER w/o HCPD
& \mstd{\underline{0.0637}}{0.08}
& \mstd{0.1685}{0.22}
& \mstd{\underline{0.0410}}{0.01}
& \mstd{0.1144}{0.04}
& \mstd{\underline{0.0605}}{0.14}
& \mstd{0.1621}{0.38} \\

REGER (SASRec)
& \mstd{0.0599}{0.14}
& \mstd{\underline{0.1759}}{0.26}
& \mstd{0.0377}{0.04}
& \mstd{\underline{0.1172}}{0.09}
& \mstd{\underline{0.0605}}{0.10}
& \mstd{\underline{0.1796}}{0.20} \\

\rowcolor{regerblue}
\textbf{REGER (DuoRec)}
& \mstd{\textbf{0.0710}}{0.11}
& \mstd{\textbf{0.1911}}{0.15}
& \mstd{\textbf{0.0455}}{0.09}
& \mstd{\textbf{0.1280}}{0.15}
& \mstd{\textbf{0.0716}}{0.06}
& \mstd{\textbf{0.1943}}{0.10} \\

\bottomrule
\end{tabular}%
}
\end{table}

\section{Results on Amazon Reviews 2023}
\label{app:additional_results}

REGER achieves the best or second-best results across all metrics on the larger 2023 benchmarks (Table~\ref{tab:amazon2023_results}), highlighting the value of learning inter-item collaborative preferences within GR.
Although its teacher DuoRec trails recent generative baselines, HCPD consistently improves REGER.
Interestingly, these gains persist on Scientific even though DuoRec underperforms REGER w/o HCPD on every metric.
These results show that a teacher's top-$K$ recommendation performance does not fully reflect the supervisory value of its preference distribution.
We attribute these gains to DuoRec's ability to capture each user's relative preferences among candidate items through discriminative item representations, providing supervision complementary to observed interactions. Furthermore, comparisons with TopoTok, DECOR, and Latte highlight the importance of translating inter-item relations into user-specific generation probabilities: semantically related items may differ in relevance to a user, and explicit preference supervision helps the generator distinguish among competing candidates.

\section{Effect of the Distillation Teacher}
\label{sec:teacher_analysis}

We compare SASRec and DuoRec as frozen teachers against REGER w/o HCPD (Table~\ref{tab:teacher_comparison}). Although SASRec performs worse overall than the student (Table~\ref{tab:main_results}), using it as the teacher improves Recall@50 across all three datasets, demonstrating that a weaker teacher can still provide useful collaborative information. However, the resulting student exhibits lower NDCG@5 and Recall@5. KL matching (Eq.~\eqref{eq:hcpd_objective}) helps explain this pattern: it can promote relevant candidates underestimated by the student, but can also transfer inaccurate teacher preferences that weaken the student's existing ranking. DuoRec consistently improves all reported metrics, benefiting both longer-list recall and top-ranked recommendation. Together with the 2023 results, these findings highlight the importance of a teacher's ability to accurately model a user's relative preferences among candidate items. This supports our choice of DuoRec~\cite{Qiu2022DuoRec}, which uses contrastive learning to mitigate representation degeneration and learn discriminative representations.

\section{Theoretical Analysis of HCPD}
\label{app:hcpd_theory}

We use the notation and assumptions of Section~\ref{sec:hcpd}.

\subsection{Proof of Theorem~\ref{thm:preference_factorization}}
\label{app:proof_preference_factorization}

\begin{proof}
The root has probability mass $1$, and the leaf for item $i$ has mass $P(i)$.
Using Eq.~\eqref{eq:prefix_preference}, the product telescopes:
\begin{equation*}
\prod_{\ell=1}^{L+1}
P(s_i^\ell\mid\mathbf{s}_i^{<\ell})
=
\prod_{\ell=1}^{L+1}
\frac{
m_P(\mathbf{s}_i^{<\ell}\oplus s_i^\ell)
}{
m_P(\mathbf{s}_i^{<\ell})
}
=
P(i).
\qedhere
\end{equation*}
\end{proof}

\subsection{Proof of Theorem~\ref{thm:kl_decomposition}}
\label{app:proof_kl_decomposition}

\begin{proof}
By Theorem~\ref{thm:preference_factorization}, expanding the item-level KL and grouping terms by tree edges gives
\begin{equation*}
\begin{gathered}
D_{\mathrm{KL}}(P\|Q)
=
\sum_{i\in\mathcal{I}}P(i)
\sum_{\ell=1}^{L+1}
\log
\frac{
P(s_i^\ell\mid\mathbf{s}_i^{<\ell})
}{
Q(s_i^\ell\mid\mathbf{s}_i^{<\ell})
}
\\
=
\sum_{\mathbf{a}\in\mathcal{P}}
\sum_{v\in\mathcal{V}(\mathbf{a})}
m_P(\mathbf{a}\oplus v)
\log
\frac{P(v\mid\mathbf{a})}
     {Q(v\mid\mathbf{a})}
\\
=
\sum_{\mathbf{a}\in\mathcal{P}}
m_P(\mathbf{a})\,
D_{\mathrm{KL}}\!\left(
P(\cdot\mid\mathbf{a})
\|
Q(\cdot\mid\mathbf{a})
\right).
\end{gathered}
\end{equation*}
The second equality collects the items traversing each edge, whose total probability is $m_P(\mathbf{a}\oplus v)$.
The last equality uses
$m_P(\mathbf{a}\oplus v)=m_P(\mathbf{a})P(v\mid\mathbf{a})$.
\end{proof}

\subsection{Full-Tree KL and Ground-Truth Path Distillation}
\label{app:full_tree_path_distillation}

Given a fixed interaction history, let $R$ be a distribution over catalog items, with prefix masses $m_R$ defined as in Eq.~\eqref{eq:prefix_preference}.
Write $\mathcal{L}_{\mathrm{HCPD}}(i)$ for Eq.~\eqref{eq:hcpd_objective} evaluated at target item $i$, and abbreviate the conditional KL as
$d(\mathbf{a})=
D_{\mathrm{KL}}(q_T(\cdot\mid\mathbf{a})\|q_\theta(\cdot\mid\mathbf{a}))$.
Grouping target paths by prefix yields
\begin{equation}
\mathbb{E}_{i\sim R}
[\mathcal{L}_{\mathrm{HCPD}}(i)]
=
\frac{1}{L+1}
\sum_{\mathbf{a}\in\mathcal{P}}
m_R(\mathbf{a})\,d(\mathbf{a}).
\label{eq:app_path_expectation}
\end{equation}
For teacher-sampled targets ($R=P_T$), Theorem~\ref{thm:kl_decomposition} gives
\begin{equation}
\mathbb{E}_{i\sim P_T}
[\mathcal{L}_{\mathrm{HCPD}}(i)]
=
\frac{1}{L+1}
D_{\mathrm{KL}}(P_T\|Q_\theta).
\label{eq:app_teacher_sampled_path}
\end{equation}
Thus, teacher-sampled paths provide an unbiased estimator of the scaled item-level KL.
Observed targets need not follow $P_T$, so their prefix weights generally differ from those in the full-tree decomposition.
Each instance evaluates conditional KL terms only at prefixes along its ground-truth path.

\subsection{Comparison with TCA-Aux}
\label{app:tca_aux_comparison}

We compare HCPD and TCA-Aux~\cite{Lin2026TCA} using the same teacher distribution, legal sets, and target prefixes.
TCA-Aux uses full-vocabulary student probabilities in its auxiliary KL.
At a fixed prefix $\mathbf{a}$, write
$t_v=q_T(v\mid\mathbf{a})$ and
$q_v=q_\theta(v\mid\mathbf{a})$
for legal tokens $v$, and suppress the conditioning on $H_u$ and $\mathbf{a}$ in $p_\theta(v)$.
Let $y\in\mathcal{V}(\mathbf{a})$ be the target token and
$Z=\sum_{v\in\mathcal{V}(\mathbf{a})}p_\theta(v)$
the student's legal probability mass.
Since $p_\theta(v)=Zq_v$ for legal tokens, extending $t$ by zero elsewhere gives
\begin{equation}
D_{\mathrm{KL}}(t\|p_\theta)
=
D_{\mathrm{KL}}(t\|q)-\log Z.
\label{eq:app_kl_mass_identity}
\end{equation}

Let $\kappa=\lambda/(L+1)>0$ account for HCPD's averaging over positions.
For TCA-Aux, let $A,B\geq0$ denote fixed weights on the full-vocabulary CE and KL terms, absorbing any positive overall rescaling.
The per-prefix joint losses of HCPD and reweighted TCA-Aux are
\begin{equation}
\begin{gathered}
\mathcal{J}_{\mathrm{HCPD}}
=-\log q_y+\kappa D_{\mathrm{KL}}(t\|q)-\log Z,
\\
\mathcal{J}_{\mathrm{aux}}
=-A\log q_y+B D_{\mathrm{KL}}(t\|q)-(A+B)\log Z.
\end{gathered}
\label{eq:app_auxiliary_objectives}
\end{equation}
Matching the coefficients of the conditional fitting terms gives
\begin{equation}
\left.\mathcal{J}_{\mathrm{aux}}\right|_{A=1,\,B=\kappa}
=
\mathcal{J}_{\mathrm{HCPD}}-\kappa\log Z.
\label{eq:app_objective_difference}
\end{equation}
The fitting terms within the legal set are then identical, but TCA-Aux imposes an additional penalty for insufficient legal probability mass.

With at least one illegal token, unrestricted logits allow $q$ and $Z$ to vary independently.
If $t$ assigns positive mass to a non-target legal token, equality up to a student-independent constant requires
$A=1$, $B=\kappa$, and $A+B=1$.
These conditions contradict $\kappa>0$, ruling out equivalence by fixed reweighting or overall rescaling in general.

The distinction also appears in the KL gradients.
Let $z_v$ denote the student's logit before the full-vocabulary softmax.
With the teacher fixed, the gradients for a legal token $v$ are
\begin{equation}
\frac{\partial D_{\mathrm{KL}}(t\|q)}{\partial z_v}
=q_v-t_v,
\qquad
\frac{\partial D_{\mathrm{KL}}(t\|p_\theta)}{\partial z_v}
=Zq_v-t_v.
\label{eq:app_distillation_gradients}
\end{equation}
For illegal tokens, these KL derivatives are $0$ and $p_\theta(v)$, respectively.
HCPD's joint objective still penalizes their probability through full-vocabulary CE.

When $q=t$ but $Z<1$, HCPD's KL and its logit gradients vanish, whereas TCA-Aux's KL remains $-\log Z$.
At prefixes with a single legal continuation, HCPD's KL is identically zero, since no competing item groups need to be distinguished.

\section{Scope of KL Supervision in HCPD}
\label{app:hcpd_design}

We evaluate \textbf{HCPD-Full}, a variant of REGER, on Beauty to examine the normalization design analyzed in Appendix~\ref{app:tca_aux_comparison}.
It retains ground-truth distillation paths but computes KL using full-vocabulary student probabilities, extending the teacher distribution by zero outside the legal set.

\begin{table}[t]
\centering
\caption{Comparison of HCPD variants for REGER on Beauty under the same hyperparameter settings.}
\label{tab:hcpd_design}
\resizebox{\linewidth}{!}{%
\begin{tabular}{lcccccc}
\toprule
\textbf{Method}
& \textbf{N@5}
& \textbf{R@5}
& \textbf{N@10}
& \textbf{R@10}
& \textbf{N@20}
& \textbf{R@20} \\
\midrule

HCPD-Full
& \mstd{0.0382}{0.13}
& \mstd{0.0580}{0.09}
& \mstd{0.0475}{0.14}
& \mstd{0.0867}{0.15}
& \mstd{0.0574}{0.16}
& \mstd{0.1241}{0.27} \\

\rowcolor{regerblue}
\textbf{HCPD}
& \mstd{0.0395}{0.08}
& \mstd{0.0594}{0.05}
& \mstd{0.0484}{0.13}
& \mstd{0.0870}{0.19}
& \mstd{0.0578}{0.12}
& \mstd{0.1243}{0.18} \\

\bottomrule
\end{tabular}%
}
\end{table}

Table~\ref{tab:hcpd_design} shows that HCPD achieves higher mean performance than HCPD-Full across all six metrics, with the largest relative improvements in NDCG@5 and Recall@5, approximately 3.4\% and 2.4\%, respectively.
As the recommendation list length increases, the performance gaps narrow, with nearly identical Recall@20, suggesting that HCPD better prioritizes relevant items near the top of the list.
HCPD-Full's competitive performance for longer recommendation lists is consistent with the conditional preference-fitting term shared by both objectives.
By Eq.~\eqref{eq:app_kl_mass_identity}, its full-vocabulary KL additionally penalizes insufficient legal probability mass, increasing the weight of a penalty already present in CE.
Using the notation of that appendix, this additional penalty has gradient $\partial(-\log Z)/\partial z_v=-(1-Z)q_v$ for a legal token $v$.
This teacher-independent term favors larger logit increases for branches already receiving higher student probabilities and can counteract corrections to overestimated branches.
Together with the comparison against TCA-Aux in Table~\ref{tab:training_objectives}, these results highlight the value of HCPD's conditional KL as a complement to CE: CE anchors learning to the observed target, while normalization over legal continuations focuses distillation on relative preferences among competing catalog items.

\end{document}